\documentclass[sigconf]{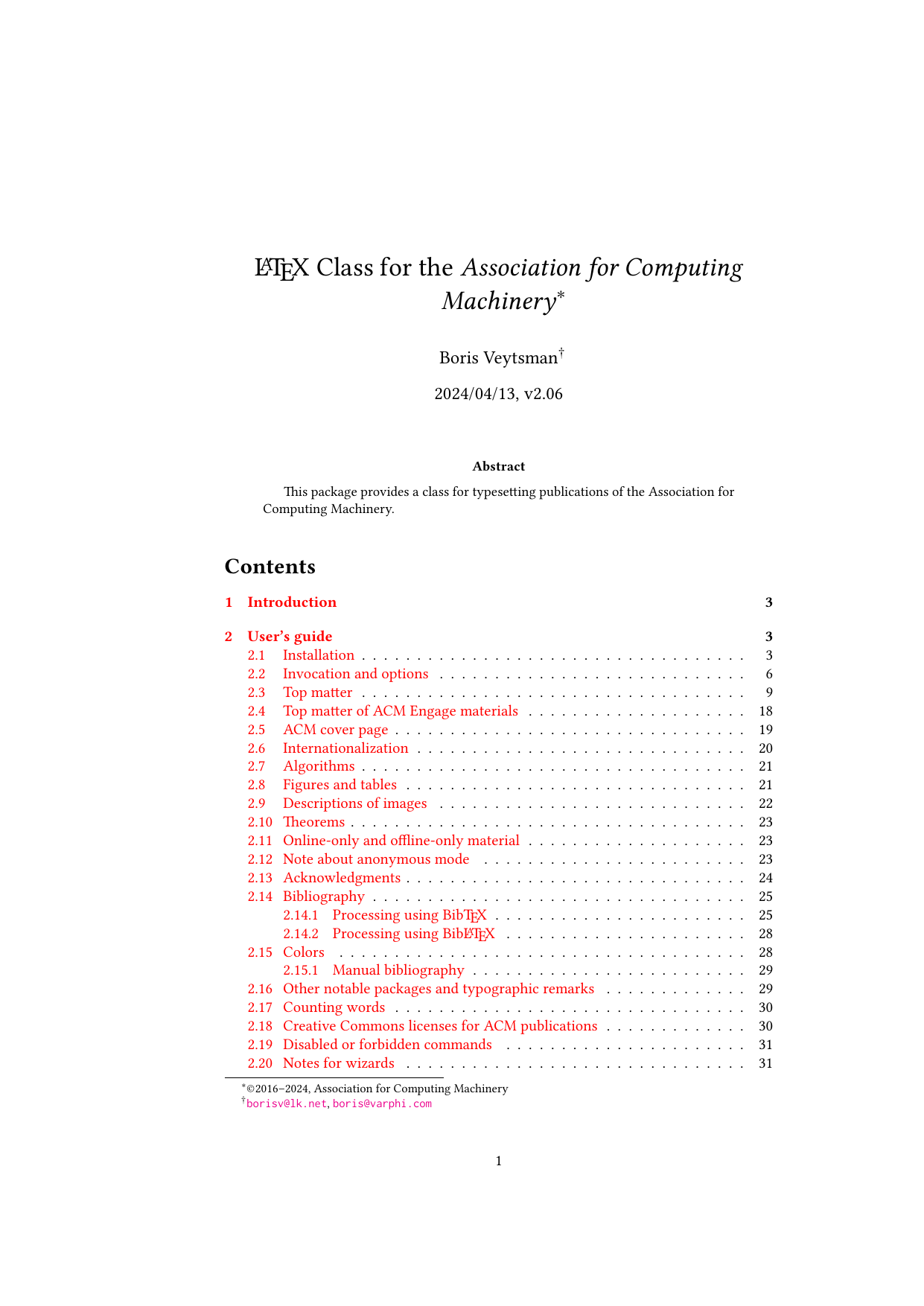}

\usepackage{subfiles}
\usepackage{subfigure}
\usepackage{amsmath}
\usepackage{algorithmic}
\usepackage[ruled,vlined,linesnumbered]{algorithm2e}
\usepackage{blindtext}
\usepackage[font=small]{caption}
\usepackage{comment}
\usepackage{enumitem}
\usepackage{float}
\usepackage{framed}
\usepackage{graphicx}
\usepackage{xargs}
\usepackage{marginnote}
\usepackage{makecell}
\usepackage{multirow}
\usepackage{colortbl}
\usepackage{textcomp}
\usepackage{soul}

\usepackage[flushleft]{threeparttable}
\newcommand{\nop}[1]{}

\newtheorem{lemma}{Lemma}

\newtheorem{theorem}{Theorem}
\newtheorem{fact}{Fact}
\theoremstyle{definition}

\newtheorem{definition}{Definition}

\newtheorem{remark}{Remark}
\allowdisplaybreaks

\makeatletter
\newsavebox\myboxA
\newsavebox\myboxB
\newlength\mylenA

\newcommand*\xbar[2][0.75]{%
    \sbox{\myboxA}{$\m@th#2$}%
    \setbox\myboxB\null
    \ht\myboxB=\ht\myboxA%
    \dp\myboxB=\dp\myboxA%
    \wd\myboxB=#1\wd\myboxA
    \sbox\myboxB{$\m@th\overline{\copy\myboxB}$}
    \setlength\mylenA{\the\wd\myboxA}
    \addtolength\mylenA{-\the\wd\myboxB}%
    \ifdim\wd\myboxB<\wd\myboxA%
       \rlap{\hskip 0.5\mylenA\usebox\myboxB}{\usebox\myboxA}%
    \else
        \hskip -0.5\mylenA\rlap{\usebox\myboxA}{\hskip 0.5\mylenA\usebox\myboxB}%
    \fi}
\makeatother

\newcommand{\vect}[1]{\ensuremath{\mathbf{#1}}}

\newcommand{\grad}{\nabla}

\newcommand{\g}{\vect{g}}

\newcommand{\x}{\vect{x}}

\AtBeginDocument{%
  }

\setcopyright{acmlicensed}
\copyrightyear{2024}
\acmYear{2024}
\setcopyright{acmlicensed}\acmConference[CCS '24]{Proceedings of the 2024
ACM SIGSAC Conference on Computer and Communications Security}{October
14--18, 2024}{Salt Lake City, UT, USA}
\acmBooktitle{Proceedings of the 2024 ACM SIGSAC Conference on Computer
and Communications Security (CCS '24), October 14--18, 2024, Salt Lake City,
UT, USA}
\acmDOI{10.1145/3658644.3670351}
\acmISBN{979-8-4007-0636-3/24/10}
\settopmatter{printacmref=true}
\begin{document}

\title{Cross-silo Federated Learning with Record-level~Personalized~Differential~Privacy}

\author{Junxu Liu}
\affiliation{
    \institution{Renmin University of China}
    \city{Beijing}
    \country{China}}
\email{geminiljx@gmail.com}

\author{Jian Lou}
\affiliation{
    \institution{Zhejiang University}
    \city{Hangzhou}
    \country{China}}
\email{jian.lou@zju.edu.cn}

\author{Li Xiong}
\affiliation{
    \institution{Emory University}
    \city{Atlanta}
    \country{USA}}
\email{lxiong@emory.edu}

\author{Jinfei Liu}
\affiliation{
    \institution{Zhejiang University}
    \city{Hangzhou}
    \country{China}}
\email{jinfeiliu@zju.edu.cn}

\author{Xiaofeng Meng}
\authornote{Corresponding author: Xiaofeng Meng.}
\affiliation{
    \institution{Renmin University of China}
    \city{Beijing}
    \country{China}}
\email{xfmeng@ruc.edu.cn}

\renewcommand{\shortauthors}{Junxu Liu, Jian Lou, Li Xiong, Jinfei Liu, and Xiaofeng Meng.}

\begin{abstract}
Federated learning (FL) enhanced by differential privacy has emerged as a popular approach to better safeguard the privacy of client-side data by protecting clients' contributions during the training process. Existing solutions typically assume a uniform privacy budget for all records and provide one-size-fits-all solutions that may not be adequate to meet each record's privacy requirement. 
In this paper, we explore the uncharted territory of cross-silo FL with record-level personalized differential privacy.
We devise a novel framework named \textit{rPDP-FL}, employing a two-stage hybrid sampling scheme with both uniform client-level sampling and non-uniform record-level sampling to accommodate varying privacy requirements. 

A critical and non-trivial problem is how to determine the ideal per-record sampling probability $q$ given the personalized privacy budget $\varepsilon$. 
We introduce a versatile solution named \textit{Simulation-CurveFitting}, allowing us to uncover a significant insight into the nonlinear correlation between $q$ and $\varepsilon$ and derive an elegant mathematical model to tackle the problem. Our evaluation demonstrates that our solution can provide significant performance gains over the baselines that do not consider personalized privacy preservation.
\end{abstract}

\begin{CCSXML}
<ccs2012>
   <concept>
       <concept_id>10002978.10003029.10011150</concept_id>
       <concept_desc>Security and privacy~Privacy protections</concept_desc>
       <concept_significance>500</concept_significance>
       </concept>
 </ccs2012>
\end{CCSXML}

\ccsdesc[500]{Security and privacy~Privacy protections}

\keywords{Federated Learning, Differential Privacy, Personalized Privacy Protection}


\maketitle

\section{Introduction}

Federated Learning (FL) \cite{mcmahan2017communication,kairouz2021advances} is a recent machine learning (ML) framework that was motivated by data privacy. In comparison to centralized ML, it eliminates the need for centralized data sharing and has the potential to harness decentralized data for powerful predictive models while alleviating individual privacy concerns. The distinctive feature is its decentralized architecture, where multiple institutions (e.g., hospitals or banks) or devices (e.g., smartphones, IoT devices \cite{DBLP:conf/mobisys/ChenWZQHWFP23,DBLP:journals/pvldb/GaoCLXPLDZ23,ma2021communication}) collaborate in training a joint model under the coordination of a central \textit{server} while keeping the data local. 
This paper primarily focuses on the former case, also known as cross-silo FL \cite{liu2022on}, where each \textit{client} (institution) holds a local dataset comprising personal data \textit{records}. For simplicity, we assume each \textit{record} is associated with a single \textit{user} (e.g. patient or customer) who does not contribute the same record or multiple records to multiple clients simultaneously.

Although data are not directly shared in FL, potential adversaries (e.g., the honest-but-curious server or untrusted clients) might engage in indirect privacy violations via reconstruction or inference attacks \cite{DBLP:conf/aistats/LowyGR23,hitaj2017deep,rigaki2023survey,tolpegin2020data,wang2019beyond,zhu2019deep,zhang2023survey,zhang2023moda}. Differential privacy (DP), known as the de facto standard for private data analysis, has been introduced to FL algorithm design \cite{liu2023echo,xiang2023practical,DBLP:journals/pvldb/LiWL23,DBLP:conf/ccs/MaddockC0MJ22, fu2024dpsur,malekmohammadi2024noise,fu2024differentially,liu2024federated}. This integration ensures rigorous privacy protection for participants (clients or records) by introducing controlled perturbation into the computation of the intermediate model parameters transferred between clients and the server \cite{mcmahan2017learning}. While standard DP provides the means to quantify the extent of privacy protection through a positive real-valued parameter $\varepsilon$ (aka \textit{privacy budget}), it imposes identical privacy safeguards on every participant involved. This uniformity cannot reflect the reality of diverse privacy expectations among people and can lead to significant utility costs \cite{jorgensen2015conservative, ebadi2015differential,boenisch2023have}. It is desirable to allow each participant to set their expected privacy budgets reflecting their personal privacy preferences. 

With this objective in mind, personalized differential privacy (PDP) \cite{jorgensen2015conservative, ebadi2015differential} was introduced and has been investigated in various scenarios including statistical analysis \cite{jorgensen2015conservative, ebadi2015differential, chen2016private}, centralized machine learning (ML) \cite{feldman2021individual, boenisch2023have}, and federated learning \cite{liu2021projected}. 
For FL, Liu et al. \cite{liu2021projected} proposed the concept of heterogeneous DP in FL, where records within a single client (institution) share the same privacy budget, but different clients may have varying privacy budgets. We refer to this specific setting as \textit{client-level} PDP-FL for clarity. In contrast, this paper introduces a broader setting where even records within the same client may have distinctive privacy preferences, referred to as \textit{record-level} PDP-FL (rPDP-FL in short). 
Figure \ref{fig:pdp_fl_framework} provides an illustrative example of the latter case in a healthcare context.
To the best of our knowledge, \textit{record-level} PDP-FL has not yet been investigated. 

\begin{figure}[t]\centering
\includegraphics[width=0.9\linewidth]{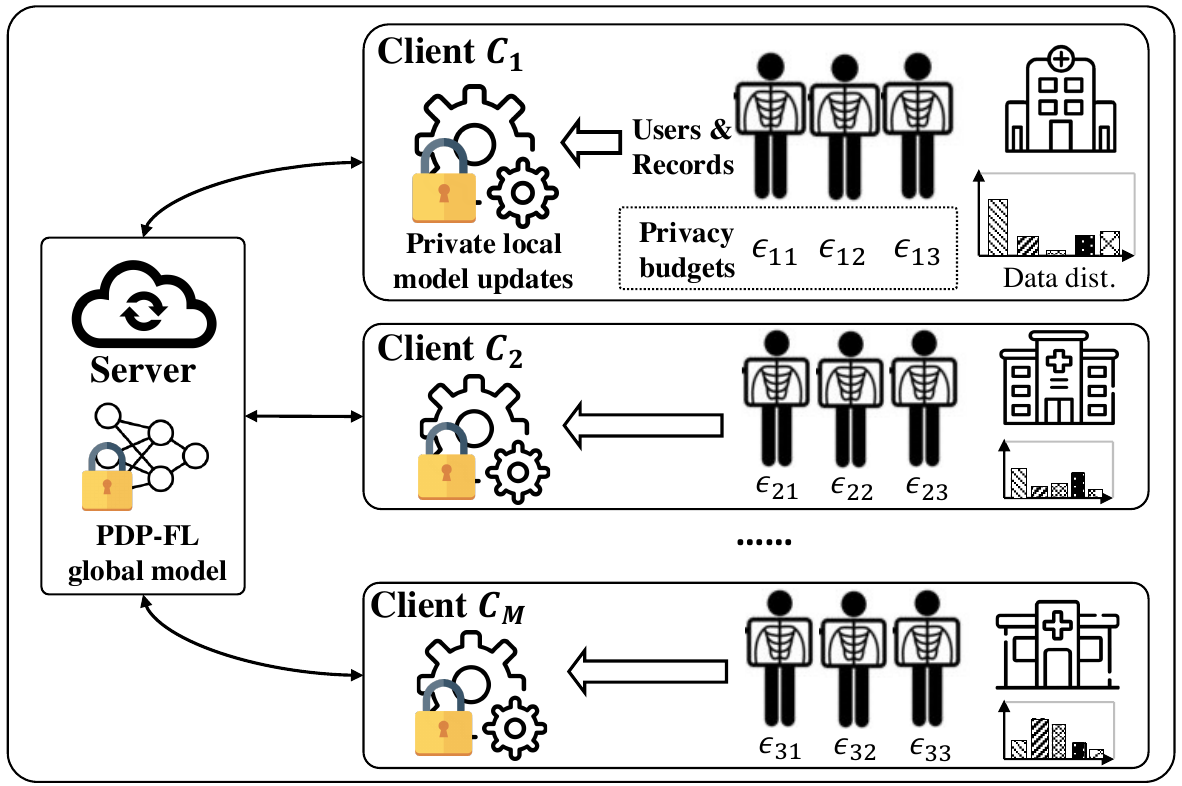}
\caption{An illustration of the cross-silo federated learning with record-level personalized differential privacy. In this framework, each user is given the autonomy to independently opt for a personalized privacy preference (specified by a personalized differential privacy (PDP) budget $\varepsilon$) for their respective records. The goal is to train a private global model that satisfies record-level PDP.}
\Description{An illustration of the cross-silo federated learning with record-level personalized differential privacy. In this framework, each user is given the autonomy to independently opt for a personalized privacy preference (estimated by a corresponding privacy budget $\varepsilon$) for their respective records. The goal is to train a private global model that satisfies record-level PDP.}
\label{fig:pdp_fl_framework}
\end{figure}

From a technical standpoint, the essence of implementing \textit{record-level} PDP lies in ensuring each record's accumulative privacy cost aligns with its predetermined privacy budget during the entire training process. This emphasizes the need for an effective privacy budget allocation strategy. When it comes to achieving PDP in a centralized ML setting (centralized PDP), some studies \cite{rogers2016privacy,feldman2021individual} use even privacy budgets across all records during every iteration of training, and records with smaller privacy budgets will be filtered out of the training process once their privacy budgets run out. This approach may trigger \textit{catastrophic forgetting} \cite{kirkpatrick2017overcoming,goodfellow2013empirical}, a situation where the learned model could potentially ``forget'' the knowledge from records that terminate early, eventually leading to degraded model performance. 
A more promising strategy involves achieving simultaneous depletion of privacy budgets for all records by developing DP mechanisms coupled with non-uniform sampling \cite{jorgensen2015conservative, boenisch2023have}. The fundamental theory underlying it is the ``privacy amplification by random sampling'' theorem \cite{li2012sampling,abadi2016deep,balle2018privacy,wang2019subsampled,zhu2019poission} which implies individuals with lower inclusion (sampling) probabilities $q$ will incur less privacy cost (leading to smaller privacy budget). 
Boenisch et al. \cite{boenisch2023have} introduced a binary search-based approach to determine an approximate optimal $q\in [0,1]$ for each record with a specific privacy budget $\varepsilon$. The computation of the accumulative privacy cost for each record is based on Mironov et al. \cite{mironov2019renyi} in which \textit{R\'enyi Differential Privacy} (RDP) \cite{mironov2017renyi,mironov2019renyi} is employed for tight privacy accounting. However, their approach is impractical when all records' privacy budgets are distributed continuously and can cover a spectrum of values, due to the scalability and computational costs associated with the per-record search process.

Overall, there exist significant research gaps in achieving record-level PDP-FL. First, the existing theoretical findings on RDP-based privacy accounting \cite{mironov2019renyi,zhu2019poission} are no longer adequate for the needs of analyzing each record's accumulative privacy cost in FL applications. Intuitively, the two-stage sampling process, i.e.,  client-level and record-level sampling, will further increase uncertainty for potential adversaries to infer whether a ``target'' record is a member of a client and hence further amplify the privacy protection. 
Second, the existing binary search-based approach \cite{boenisch2023have} for finding the optimal sampling probability $q$ in centralized PDP is not efficient. A desirable way is to directly compute a sampling probability given a privacy budget for each record. However, it's non-trivial to derive an explicit closed-form solution due to the highly nonlinear and less interpretable RDP-based privacy accounting. 

\noindent\textbf{Contributions.} Our key contributions are outlined as follows.
\begin{enumerate}[leftmargin=*]
\item We formalize a real-world problem in federated learning concerning record-level personalized differential privacy. To solve this problem, we propose a novel framework called \textit{rPDP-FL}. The essence of this framework is a \textit{two-stage hybrid sampling} scheme which comprises a uniform client-level sampling process and a non-uniform record-level sampling process. Specifically, the per-client sampling probability is assumed as a hyperparameter and publicly known by both the server and all clients, while the per-record sampling probability is proportional to each record's privacy budget and determined by the client to which it belongs.
\item We formally analyze the enhanced privacy amplification effect of the two-stage hybrid sampling scheme. This RDP-based theoretical investigation fills a gap in existing research and facilitates a more favorable trade-off between privacy and utility.
\item We devise an efficient and general strategy named \textit{Simulation-CurveFitting} (SCF) to identify the sampling probabilities for all records given their personalized privacy budgets. Our simulations with varying sampling probabilities enable the identification of an elegant mathematical function discerning the relationship between per-record sampling probabilities and their accumulative privacy costs. An important insight arises: the tight upper bound on the accumulative privacy cost of rPDP-FL can be modeled by a simple exponential function w.r.t. its record-level sampling probability. 
\item We simulate three potential personalized privacy scenarios and conduct a comprehensive evaluation on two real-world datasets. We first show that our SCF strategy outperforms the existing PDP methods for centralized ML \cite{boenisch2023have,feldman2021individual} in model utility and computational efficiency. Additionally, we demonstrate that rPDP-FL significantly enhances the utility of the global model compared to baseline methods that do not incorporate personalized privacy preservation.
\end{enumerate}

\section{Related Work}\label{sec:related_work}

\noindent\textbf{Personalized Differential Privacy.} 
The concept of personalized DP (PDP) was initially introduced by Ebadi et al. \cite{ebadi2015differential} and Jorgensen et al. \cite{jorgensen2015conservative}, focusing on basic private statistical analysis tasks with the standard $\varepsilon$-DP framework. Notably, the \textit{Sample} mechanism proposed in \cite{jorgensen2015conservative} demonstrated the feasibility of implementing PDP by combining DP mechanisms (e.g., Laplace or Gaussian mechanism) with non-uniform record-level sampling. 

Recent work studied PDP in centralized ML \cite{boenisch2023have} built on top of the non-uniform sampling strategy and proposed a binary search-based approach to find a suitable sampling probability $q$ as a decimal value within the range of $[0,1]$ for each record given a target privacy budget $\varepsilon$. It is, however, computationally demanding for the more realistic settings where records' privacy budgets are distributed continuously (e.g., follow a Gaussian or Pareto distribution) and can cover a range of values. 
Another line of work \cite{rogers2016privacy,feldman2021individual} considered all records' privacy budgets to be uniform during each iteration of the training process. Two individual privacy accounting techniques named \textit{privacy odometer} and \textit{privacy filter} are designed to monitor and restrict accumulative privacy costs for individual records throughout the training process so that a record will be excluded from the subsequent training iterations once its privacy budget is exhausted. This poses a potential risk \textcolor{black}{of} \textit{catastrophic forgetting} \cite{kirkpatrick2017overcoming,goodfellow2013empirical} and may lead to downgraded model performance. 

\noindent\textbf{Federated Learning with DP and Personalized DP.} We discuss existing work on FL with DP in two aspects: (1) the granularity of the DP guarantee, i.e., what information is protected (each client or each record), and (2) the level of personalization for DP, i.e., who has the right to specify the privacy budget (the central server, each client, or each record).

\begin{itemize}[leftmargin=*]
\item \textit{Client- vs. record-level privacy protection}.
There is rich literature exploring the DP-FL framework concerning potential adversaries. Specifically, these adversaries may be either solely recipients of the global model parameters (i.e., the other \textit{untrusted} clients or third parties) or recipients of local model updates (i.e. the \textit{honest-but-curious} central server). Within this framework, two categories of DP guarantees are recognized: client- and record-level DP. The former is achieved by adding random Gaussian noise to the aggregated local model updates to hide a single client’s contribution \cite{geyer2017differentially}, while the latter requires clients to perturb their computed gradients locally to obscure a single record's contribution \cite{mcmahan2017learning,wei2020federated,liu2022on}. Our primary focus lies on achieving record-level protection against both attack scenarios.
\item \textit{Client- vs. record-level privacy personalization}.
As mentioned earlier, the majority of approaches offer uniform privacy guarantees for all records involved, based on the one-sided considerations of the central server. Only a few studies recognize the necessity of privacy personalization within FL applications. 
Liu et al. \cite{liu2021projected} introduced the concept of heterogeneous DP and developed a projection-based framework to accommodate diverse privacy budgets among different clients. Although the work \cite{liu2022on} also proposed a similar notion known as \textit{silo-specific sample-level} DP, it primarily focused on addressing data heterogeneity challenges and did not address varying privacy needs. Liu et al. \cite{liu2023echo}, on the other hand, centered on cross-device FL and achieved personalized local differential privacy (PLDP) for clients' local model gradients. However, it requires a large number of clients for reasonable utility. Our research represents the first attempt to explore record-level privacy personalization in cross-silo FL.
\end{itemize}

\noindent\textbf{Tight Privacy Analysis for DP-FL.}
Conducting a tight analysis of the accumulative privacy cost is crucial for designing DP algorithms effectively. The predominant focus of research on this issue centers around centralized ML \cite{abadi2016deep, mironov2019renyi, zhu2019poission, wang2019subsampled}, with limited attention directed towards the FL scenarios \cite{noble2022differentially,girgis2021shuffled} where the employment of both data and client sampling may lead to an enhanced privacy amplification effect. 
Girgis et al. \cite{girgis2021shuffled} investigated a related issue but focused on offering local differential privacy (LDP) guarantees for clients' gradients. In their framework, only one step of the local Stochastic Gradient Descent (SGD) update is executed per client per round, whereas our algorithm allows for multiple local updates.
Noble et al. \cite{noble2022differentially} adopted RDP to track the privacy cost over the local SGD iterations, while using $(\varepsilon,\delta)$-DP to evaluate privacy costs over global communication rounds. This conventional privacy notion is often considered suboptimal in practical applications. 
Our work extends existing findings by leveraging RDP tools to estimate the gain of privacy caused by client sampling, see Section \ref{sec:privacy}.

\section{Preliminaries}

Differential Privacy (DP) is a robust and mathematically rigorous definition of privacy. It allows for the quantification of the information leaked by an algorithm about its input data. Note that when two datasets $D$ and $D'$ differ by only one record\footnote{In this work, we consider the presence/absence model of privacy, where protection is w.r.t. the presence/absence of a record in the analyzed dataset, e.g., $D'\triangleq D\setminus \{d\}$, instead of the replacement of a record with another.}, denoted as $D\sim D'$, we refer to them as adjacent datasets. 
\begin{definition} [($\varepsilon,\delta$)-Differential Privacy \cite{dwork2006calibrating, dwork2014algorithmic}] \label{def:dp}
A randomized algorithm $\mathcal{A}:\mathbb{D}\rightarrow\mathbb{O}$ satisfies ($\varepsilon,\delta$)-DP if for any pair of adjacent datasets $D, D'\in \mathbb{D}$ and any subsets of outputs $o\subseteq \mathbb{O}$, it holds that
$$\Pr[\mathcal{A}(D) \in o] \leq e^\varepsilon \Pr[\mathcal{A}(D') \in o] + \delta.$$
\end{definition}
The privacy guarantee is controlled by the ``privacy budget'' $\varepsilon>0$ and the parameter $\delta\geq 0$ which captures the probability that the pure $\varepsilon$-DP (i.e., ($\varepsilon,0$)-DP) is broken. While the standard ($\varepsilon,\delta$)-DP is widely used in a broad range of literature, it may not be suitable for some settings.  The following are two notable limitations associated with ($\varepsilon,\delta$)-DP recognized in literature: 
\begin{enumerate}[leftmargin=*]
    \item ($\varepsilon,\delta$)-DP provides \textit{uniform} privacy guarantees for the entire dataset regardless of the individuals' preferences; 
    \item ($\varepsilon,\delta$)-DP offers a relatively \textit{loose} composition bound and thus it is not suitable to track and analyze the overall privacy cost of complex iterative algorithms which will lead to poor privacy and utility trade-off.
\end{enumerate}

In this study, our aim is to design a finely tailored algorithm with personalized privacy that effectively tackles the aforementioned challenges in the context of FL applications. We first review the notions of personalized differential privacy (PDP) and R\'enyi differential privacy (RDP), both of which form building blocks for our privacy analysis and algorithm design. More specifically, PDP tailors the level of privacy protection based on the specific privacy preferences of each record.  RDP offers a versatile framework for tight privacy accounting and better privacy-utility trade-offs.

\subsection{Personalized Differential Privacy}

Personalized DP is a variation of DP that bounds the \textit{individual} privacy cost for each record in the dataset. For example, the privacy guarantee for a specific record $d_j$ is defined over all pairs of adjacent datasets that differ by $d_j$, denoted as $\smash{D\stackrel{d_j}{\sim}D_{-j}}$. For clarity, we refer to this variant as \textit{record-specific} adjacent datasets and we have the relationship $\{(D,D_{-j})\} \subset \{(D,D')\}$.

\begin{definition}[$(\mathcal{E},\delta)$-Personalized Differential Privacy \cite{jorgensen2015conservative}] \label{def:individual_dp}
Given a dataset $D$ with each record $d_j\in D$ corresponding to a specific privacy budget $\varepsilon_j>0$. Let $\mathcal{E}=\{\varepsilon_j\}_{j\in[N]}$. A randomized algorithm $\mathcal{A}:\mathbb{D}\rightarrow\mathbb{O}$ satisfies $(\mathcal{E},\delta)$-personalized differential privacy (PDP) if it guarantees $(\varepsilon_j,\delta)$-DP w.r.t. the specific record $d_j$. That is, for any pair of record-specific adjacent datasets $D, D_{-j}\in \mathbb{D}$ and any subsets of output $o\subseteq \mathbb{O}$, it holds that 
$$\Pr[\mathcal{A}(D) \in o] \leq e^{\varepsilon_j} \Pr[\mathcal{A}(D_{-j}) \in o] + \delta.$$
\end{definition}

\begin{remark}
Although $\delta$ is also an important DP parameter and technically its value could be randomly specified like $\varepsilon$, we assume all records share a common $\delta$ with a small, positive default value in this paper based on the following two considerations.
\begin{itemize}[leftmargin=*]
\item On the one hand, $\delta$ is commonly taken to be ``sub-polynomially small'', that is, a rule-of-thumb is that it should be much smaller than the inverse of any polynomial in the size of the dataset~\cite{dwork2014algorithmic,dwork2016concentrated}. Since individuals may not have access to the complete dataset or information about its size, it becomes difficult for them to properly set a value for $\delta$ that meets the desired privacy budgets.
\item On the other hand, the choices of $\varepsilon$ and the choices of $\delta$ are statistically independent, that is, for two different records $d_1,d_2\in D$, if $\varepsilon_1\geq\varepsilon_2$, it is not necessarily always $\delta_1\geq\delta_2$ (and vice versa). We argue this issue is complicated and leave it as an open problem.
\end{itemize}
\end{remark}

\noindent\textbf{The Sample Mechanism.} Building upon the findings of privacy amplification by random sampling~\cite{li2012sampling,abadi2016deep,balle2018privacy,wang2019subsampled,zhu2019poission}, Jorgensen et al. \cite{jorgensen2015conservative} proposed the Sample mechanism. It achieves ($\mathcal{E},0$)-PDP by applying an arbitrary mechanism that satisfies $\varepsilon$-DP on a subset of data records which is obtained by a \textit{non-uniform Poisson sampling} procedure. Our work is inspired by this idea but encounters greater challenges due to the utilization of RDP, detailed in Section \ref{method:challenges}. 


\begin{definition}[Poisson Sampling \cite{zhu2019poission}] \label{def:poisson_sampling}
Given a dataset $D$ with size $N$ and a set of per-record sampling probabilities $\vect{q} = \{q_i|q_i\in [0,1], i\in [N]\}$, the Poisson sampling procedure outputs a subset $\{d_i |\beta_i =1,i\in [N]\}$ by sampling a Bernoulli random variable $\beta_i\sim Ber(q_i)$ independently. Here $\beta_i\in\{0,1\}$ denotes an indicator that depicts each individual’s participation in the dataset.
\end{definition}

\begin{definition}[Poisson-Sampled Gaussian (PoiSG) mechanism] \label{def:poisgm}
Let $D\in \mathbb{D}$ be an input dataset and $\vect{q}=\{q_1,\dots,q_N\}$ denote the set of sampling probabilities of each record $d_i\in D$. Consider a function $f:\mathbb{D}\rightarrow \mathbb{O}$ with $\ell_2$-sensitivity $L$, then the Poisson-Sampled Gaussian (PoiSG) mechanism is defined as:
$$PoiSG_{\vect{q},\sigma}(D) \triangleq f(S) + \zeta, \quad\zeta\sim \mathcal{N}(0,\sigma^2L^2),$$
where each element $d_i\in S\subseteq D$ is selected via Poisson sampling, and $\mathcal{N}(0,\sigma^2L^2)$ is a Gaussian distribution with standard deviation $\sigma L$. Note that we assume $L=1$ throughout the rest of this paper.
\end{definition}

\subsection{R\'enyi Differential Privacy}

R\'enyi differential privacy (RDP) utilizes the asymmetric measure of R\'enyi divergence to quantify the privacy guarantee. Note that with a controlling parameter $\alpha\neq 1$, the R\'enyi divergence of order $\alpha$ from distribution $Q$ to $P$ is:
\begin{equation*}\begin{split}
D_\alpha(P\|Q) \triangleq \frac{1}{\alpha-1}\log \mathbb{E}_{o\sim Q}\left[\left(\frac{P(o)}{Q(o)}\right)^\alpha\right].
\end{split}\end{equation*}
Let $P=\mathcal{A}(D)$ and $Q=\mathcal{A}(D')$, then ($\alpha,\rho$)-RDP is achieved by simultaneously bounding the R\'enyi divergence of two directions, denoted by $D_\alpha^{\leftrightarrow}(P\|Q)\triangleq \max\{D_\alpha(P\|Q), D_\alpha(Q\|P)\}$.
\begin{definition} [($\alpha,\rho$)-R\'enyi Differential Privacy \cite{mironov2017renyi}] \label{def:renyi_dp}
A randomized mechanism $\mathcal{A}$ satisfies ($\alpha,\rho$)-RDP with order $\alpha\in (1,\infty)$ if
for any pair of adjacent datasets $D,D'\in \mathbb{D}$, 
it holds that
\begin{equation}\begin{split}
D_\alpha^{\leftrightarrow}(\mathcal{A}(D)\|\mathcal{A}(D')) \leq \rho.
\end{split}\end{equation}
\end{definition}

Different from the traditional $(\varepsilon,\delta)$-DP, which measures privacy leakage by utilizing the max divergence of two output distributions, RDP allows for a continuous spectrum of privacy measures. More specifically, as $\alpha\rightarrow \infty$, $D_\infty(\mathcal{A}(D)\|\mathcal{A}(D'))$ is equal to the max divergence \cite{dwork2014algorithmic}; and $\lim_{\alpha\rightarrow 1}D_\alpha(\mathcal{A}(D)\|\mathcal{A}(D'))$ can be verified to be equal to the expected value of the privacy cost random variable $c(o;\mathcal{A},D,D')\triangleq \ln\frac{\Pr[\mathcal{A}(D) \in o]}{\Pr[\mathcal{A}(D') \in o]}$ \cite{mironov2017renyi}. This characteristic enables RDP to provide a sharper privacy quantification and become one of the most popular privacy analysis tools, especially adept at handling composite mechanisms like differentially private stochastic gradient descent (DP-SGD) \cite{abadi2016deep}. 

We provide the following useful lemmas which are important primitives for the design of our FL algorithm and privacy analysis.

\begin{lemma}[Transition from RDP to DP \cite{mironov2017renyi}]\label{lemma:rdp_dp_transition}
If $\mathcal{A}$ is an ($\alpha,\rho$)-RDP mechanism, it also satisfies ($\rho+\frac{\log{1/\delta}}{\alpha-1},\delta$)-DP for any $0<\delta<1$.
\end{lemma}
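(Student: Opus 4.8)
The plan is to prove the lemma by reducing the $(\varepsilon,\delta)$-DP guarantee to a tail bound on the \emph{privacy loss random variable} and then controlling that tail with the given R\'enyi divergence bound via a Markov-type inequality. Fix an arbitrary pair of adjacent datasets $D,D'$ and write $P=\mathcal{A}(D)$, $Q=\mathcal{A}(D')$ for the two output distributions, with densities $P(o),Q(o)$. I define the privacy loss $L(o)\triangleq \log\frac{P(o)}{Q(o)}$ and set the target budget $\varepsilon \triangleq \rho+\frac{\log(1/\delta)}{\alpha-1}$. Because the definition of $(\varepsilon,\delta)$-DP (Definition \ref{def:dp}) must hold for every ordered adjacent pair, and the two-sided RDP bound supplies $D_\alpha(P\|Q)\le\rho$ in particular, it suffices to establish $\Pr[\mathcal{A}(D)\in o]\le e^\varepsilon\Pr[\mathcal{A}(D')\in o]+\delta$ for this fixed ordering and then appeal to symmetry for the other.

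First I would split any output event $o\subseteq\mathbb{O}$ according to the magnitude of the privacy loss. On the ``good'' region $\{L\le\varepsilon\}$ one has $P(o)\le e^\varepsilon Q(o)$ pointwise, so the contribution of $o\cap\{L\le\varepsilon\}$ is at most $e^\varepsilon \Pr[\mathcal{A}(D')\in o]$; on the ``bad'' region $\{L>\varepsilon\}$ the contribution is trivially at most the total $P$-mass $\Pr_{o\sim P}[L(o)>\varepsilon]$. This decomposition yields
\begin{equation*}
\Pr[\mathcal{A}(D)\in o]\;\le\; e^\varepsilon \Pr[\mathcal{A}(D')\in o] \;+\; \Pr_{o\sim P}[L(o)>\varepsilon],
\end{equation*}
so the entire problem reduces to showing that the tail probability $\Pr_{o\sim P}[L(o)>\varepsilon]$ is at most $\delta$.

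The core step is to bound this tail using the RDP hypothesis. I would rewrite $\{L(o)>\varepsilon\}$ equivalently as $\{(P(o)/Q(o))^{\alpha-1}>e^{(\alpha-1)\varepsilon}\}$, which is valid since $\alpha>1$, and then apply Markov's inequality to the nonnegative random variable $(P(o)/Q(o))^{\alpha-1}$ under $o\sim P$, giving
\begin{equation*}
\Pr_{o\sim P}[L(o)>\varepsilon]\;\le\; e^{-(\alpha-1)\varepsilon}\,\mathbb{E}_{o\sim P}\!\left[\left(\tfrac{P(o)}{Q(o)}\right)^{\alpha-1}\right].
\end{equation*}
The key identity is a change of measure, $\mathbb{E}_{o\sim P}[(P(o)/Q(o))^{\alpha-1}]=\mathbb{E}_{o\sim Q}[(P(o)/Q(o))^{\alpha}]=e^{(\alpha-1)D_\alpha(P\|Q)}\le e^{(\alpha-1)\rho}$, where the last inequality is exactly the $(\alpha,\rho)$-RDP assumption. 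Substituting and plugging in the chosen $\varepsilon=\rho+\frac{\log(1/\delta)}{\alpha-1}$ collapses the bound to $e^{(\alpha-1)(\rho-\varepsilon)}=\delta$, which completes the argument. I expect the main subtlety to be this third step: recognizing that Markov's inequality must be applied to the $(\alpha-1)$-th power of the likelihood ratio so that the subsequent change of measure lands precisely on the R\'enyi divergence of order $\alpha$. By contrast, the event decomposition of step two is the standard bridge between a privacy-loss tail bound and $(\varepsilon,\delta)$-DP, and the final solve for $\varepsilon$ is routine algebra.
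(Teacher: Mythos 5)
Your proof is correct. A point of context: the paper itself does not prove this lemma at all---it is imported verbatim from the cited reference (Mironov's RDP paper)---so the only meaningful comparison is with the proof given there. Your route is the standard ``privacy-loss tail bound'' argument: decompose any event according to whether the privacy loss $L(o)=\log\frac{P(o)}{Q(o)}$ exceeds $\varepsilon$, then control $\Pr_{o\sim P}[L(o)>\varepsilon]$ by applying Markov's inequality to $(P/Q)^{\alpha-1}$ and using the change of measure $\mathbb{E}_{o\sim P}\bigl[(P/Q)^{\alpha-1}\bigr]=\mathbb{E}_{o\sim Q}\bigl[(P/Q)^{\alpha}\bigr]=e^{(\alpha-1)D_\alpha(P\|Q)}\le e^{(\alpha-1)\rho}$, after which the choice $\varepsilon=\rho+\frac{\log(1/\delta)}{\alpha-1}$ collapses the tail to exactly $\delta$. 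Mironov's original proof takes a genuinely different route: he applies H\"older's inequality to obtain the probability-preservation bound $P(S)\le\bigl(e^{\rho}\,Q(S)\bigr)^{(\alpha-1)/\alpha}$ for every event $S$, and then does a case analysis on the size of $Q(S)$---when $Q(S)$ is large this bound already gives $P(S)\le e^{\varepsilon}Q(S)$, and when $Q(S)$ is small it gives $P(S)\le\delta$. Both arguments rest on the same moment identity for the likelihood ratio and yield the identical constant; yours has the advantage of making transparent where the additive $\delta$ comes from (a tail event of the privacy loss), while Mironov's avoids introducing the privacy-loss random variable altogether. Two small points you should make explicit in a polished write-up: (i) outcomes with $Q(o)=0$ but $P(o)>0$ have $L(o)=\infty$ and would make the divergence infinite, so finiteness of the RDP bound $D_\alpha(P\|Q)\le\rho$ is what legitimizes the change of measure; (ii) your closing appeal to symmetry is valid precisely because the paper's Definition of RDP bounds the divergence in \emph{both} directions, via $D_\alpha^{\leftrightarrow}$.
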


\begin{lemma}[Adaptive sequential composition \cite{mironov2017renyi}]\label{lemma:adaptive_comp}
If $\mathcal{A}_1: \mathbb{D} \rightarrow \mathbb{O}_1$ is ($\alpha, \rho_1$)-RDP and $\mathcal{A}_2: \mathbb{D}\times\mathbb{O}_1 \rightarrow \mathbb{O}_2$ is ($\alpha, \rho_2$)-RDP, then the composed mechanism $\mathcal{A} \triangleq \mathcal{A}_1 \circ \mathcal{A}_2: \mathbb{D} \rightarrow \mathbb{O}_1\times\mathbb{O}_2$ satisfies ($\alpha, \rho_1 + \rho_2$)-RDP.
\end{lemma}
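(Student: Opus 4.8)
The plan is to establish the single-direction bound $D_\alpha(\mathcal{A}(D)\|\mathcal{A}(D')) \le \rho_1+\rho_2$ first, and then recover the symmetrized statement by appealing to symmetry. I would begin by fixing an arbitrary pair of adjacent datasets $D\sim D'$ and writing the joint output distributions of the composed mechanism in factorized form. Denoting by $P_1,Q_1$ the laws of $\mathcal{A}_1(D),\mathcal{A}_1(D')$ and by $P_2(\cdot\mid o_1),Q_2(\cdot\mid o_1)$ the laws of $\mathcal{A}_2(D,o_1),\mathcal{A}_2(D',o_1)$, the composed outputs have densities $P(o_1,o_2)=P_1(o_1)P_2(o_2\mid o_1)$ and $Q(o_1,o_2)=Q_1(o_1)Q_2(o_2\mid o_1)$ on $\mathbb{O}_1\times\mathbb{O}_2$. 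This factorization is the backbone of the argument and uses only the chain-rule structure of the adaptive composition.

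Next I would expand the exponentiated R\'enyi divergence from the definition given in the text, namely $\exp((\alpha-1)D_\alpha(P\|Q))=\mathbb{E}_{o\sim Q}[(P(o)/Q(o))^\alpha]$, and substitute the factorized densities. The key algebraic move is to split the resulting double integral over $(o_1,o_2)$ into an outer integral over $o_1$ against $Q_1$ and an inner integral over $o_2$ against $Q_2(\cdot\mid o_1)$, so that the integrand becomes $(P_1(o_1)/Q_1(o_1))^\alpha$ multiplied by the inner quantity $\int Q_2(o_2\mid o_1)\,(P_2(o_2\mid o_1)/Q_2(o_2\mid o_1))^\alpha\,\mathrm{d}o_2$, which is precisely $\exp((\alpha-1)D_\alpha(P_2(\cdot\mid o_1)\|Q_2(\cdot\mid o_1)))$.

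Here the adaptivity hypothesis on $\mathcal{A}_2$ does the essential work: because $\mathcal{A}_2$ is $(\alpha,\rho_2)$-RDP for every fixed auxiliary input $o_1$, the inner quantity is bounded by $\exp((\alpha-1)\rho_2)$ uniformly in $o_1$. Pulling this constant out of the outer integral leaves $\exp((\alpha-1)\rho_2)\cdot\mathbb{E}_{o_1\sim Q_1}[(P_1(o_1)/Q_1(o_1))^\alpha]=\exp((\alpha-1)\rho_2)\exp((\alpha-1)D_\alpha(P_1\|Q_1))\le \exp((\alpha-1)(\rho_1+\rho_2))$, invoking the $(\alpha,\rho_1)$-RDP guarantee of $\mathcal{A}_1$. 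Taking logarithms and dividing by $\alpha-1>0$ yields $D_\alpha(\mathcal{A}(D)\|\mathcal{A}(D'))\le\rho_1+\rho_2$. Since both hypotheses bound the symmetrized divergence $D_\alpha^{\leftrightarrow}$, repeating the identical computation with the roles of $D$ and $D'$ exchanged gives the reverse-direction bound, and taking the maximum over the two directions completes the argument.

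The main obstacle is conceptual rather than computational: one must correctly interpret what ``$\mathcal{A}_2$ is $(\alpha,\rho_2)$-RDP'' means for an adaptively chosen mechanism, namely that the bound $D_\alpha(P_2(\cdot\mid o_1)\|Q_2(\cdot\mid o_1))\le\rho_2$ holds \emph{uniformly} over every realization $o_1$ of the first output, not merely on average. This uniformity is exactly what licenses factoring the constant $\exp((\alpha-1)\rho_2)$ out of the outer integral; once it is in hand, the remainder is the standard Fubini-style decoupling of the product measure and poses no difficulty.
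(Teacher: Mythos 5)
Your proof is correct: the factorization of the joint law, the Fubini-style splitting, the uniform-in-$o_1$ bound on the inner integral, and the final symmetrization are exactly the standard argument for adaptive RDP composition. Note that the paper itself does not prove this lemma---it cites it directly from Mironov's RDP paper---and your derivation reproduces that reference's proof (Proposition 1 there) essentially verbatim, including the key observation that adaptivity requires the $(\alpha,\rho_2)$-RDP bound on $\mathcal{A}_2$ to hold uniformly over every auxiliary input $o_1$.
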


\begin{lemma}[Post-processing \cite{mironov2017renyi}]\label{lemma:rdp_post_processing}
If $\mathcal{A}$ is ($\alpha,\rho$)-RDP and $\mathcal{F}:\mathbb{O} \rightarrow \mathbb{O}'$ is an arbitrary data-independent randomized mapping, then $\mathcal{F} \circ \mathcal{A}$ is ($\alpha,\rho$)-RDP.
\end{lemma}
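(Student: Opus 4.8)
The plan is to reduce the claim to the data processing inequality for Rényi divergence, namely that applying any fixed, data-independent stochastic map cannot increase $D_\alpha$. Fix an arbitrary pair of adjacent datasets $D, D' \in \mathbb{D}$ and write $P = \mathcal{A}(D)$ and $Q = \mathcal{A}(D')$ for the induced output distributions on $\mathbb{O}$. Because $\mathcal{A}$ is $(\alpha,\rho)$-RDP, both $D_\alpha(P\|Q) \le \rho$ and $D_\alpha(Q\|P) \le \rho$. Since $\mathcal{F} \circ \mathcal{A}$ produces $\mathcal{F}(P)$ and $\mathcal{F}(Q)$ on this same adjacent pair, it suffices to show $D_\alpha(\mathcal{F}(P)\|\mathcal{F}(Q)) \le D_\alpha(P\|Q)$ together with the symmetric inequality obtained by swapping $P$ and $Q$; taking the maximum then yields $D_\alpha^{\leftrightarrow}(\mathcal{F}(P)\|\mathcal{F}(Q)) \le \rho$, which is exactly the RDP condition for the composed mechanism.

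For the core inequality, I would represent $\mathcal{F}$ by a Markov kernel $\mathcal{F}(o'\mid o)$ giving the probability (density) of output $o' \in \mathbb{O}'$ from input $o \in \mathbb{O}$; data-independence guarantees that the identical kernel is applied to both $P$ and $Q$. Writing the pushforward densities as $\mathcal{F}(P)(o') = \int \mathcal{F}(o'\mid o)P(o)\,\mathrm{d}o$ and likewise for $Q$, the likelihood ratio of the pushforwards can be rewritten as a posterior expectation of the original likelihood ratio,
\begin{equation*}
\frac{\mathcal{F}(P)(o')}{\mathcal{F}(Q)(o')} = \mathbb{E}_{o\sim Q(\cdot\mid o')}\!\left[\frac{P(o)}{Q(o)}\right],
\end{equation*}
where $Q(\cdot\mid o')$ denotes the posterior over $o$ induced by $Q$ and the kernel at output $o'$. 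Since $x\mapsto x^\alpha$ is convex for $\alpha>1$, Jensen's inequality gives $\bigl(\mathcal{F}(P)(o')/\mathcal{F}(Q)(o')\bigr)^\alpha \le \mathbb{E}_{o\sim Q(\cdot\mid o')}\bigl[(P(o)/Q(o))^\alpha\bigr]$. Taking expectation over $o'\sim \mathcal{F}(Q)$ and invoking the tower property to collapse the posterior back to $Q$ yields $\mathbb{E}_{o'\sim \mathcal{F}(Q)}[(\mathcal{F}(P)(o')/\mathcal{F}(Q)(o'))^\alpha] \le \mathbb{E}_{o\sim Q}[(P(o)/Q(o))^\alpha]$. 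Since $\log$ is monotone and $\alpha-1>0$, applying $\frac{1}{\alpha-1}\log(\cdot)$ to both sides gives $D_\alpha(\mathcal{F}(P)\|\mathcal{F}(Q)) \le D_\alpha(P\|Q)$.

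The same argument with the roles of $P$ and $Q$ exchanged gives the other direction, so $D_\alpha^{\leftrightarrow}$ is also non-increasing under $\mathcal{F}$, completing the proof. I expect the main obstacle to be the measure-theoretic justification of the posterior-expectation rewriting and the interchange of integration order underlying both the Jensen step and the final collapse to $Q$; these require care when $\mathbb{O}$ and $\mathbb{O}'$ are general, possibly continuous, spaces and when densities may vanish. Crucially, the data-independence of $\mathcal{F}$—which ensures that one common kernel acts on both distributions—is exactly what makes the rewriting valid and keeps the argument clean.
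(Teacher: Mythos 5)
Your proof is correct, but there is nothing in the paper to compare it against: the paper does not prove this lemma at all. It imports it verbatim from \cite{mironov2017renyi} as a known building block, and its only use is the one-line proof of Lemma~\ref{lemma:post_process} (the claim that averaging the selected clients' outputs costs no extra privacy). Your argument — reducing post-processing to the data-processing inequality for R\'enyi divergence, then proving that inequality via the posterior-expectation identity, Jensen's inequality for the convex map $x\mapsto x^\alpha$ ($\alpha>1$), and the tower property to collapse the iterated expectation back to $Q$ — is essentially the standard proof in the cited literature: Mironov reduces post-processing to monotonicity of R\'enyi divergence, which van Erven and Harrem\"oes establish by exactly this conditional-Jensen route. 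You correctly handle both directions of $D_\alpha^{\leftrightarrow}$, which matters since the paper's Definition~\ref{def:renyi_dp} takes the maximum of the two divergences, and the measure-theoretic caveats you flag are the right ones and are routine to discharge: the interchanges of integration are justified by Tonelli since all integrands are non-negative, the posterior kernel is defined $\mathcal{F}(Q)$-almost everywhere (which is all that is needed under the outer expectation), and if $P\not\ll Q$ then $D_\alpha(P\|Q)=\infty$ for $\alpha>1$, making the inequality trivial. In short, your proposal supplies a correct, self-contained proof where the paper relies on a citation.
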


\begin{lemma}[Privacy amplification via (uniform) Poisson sampling for Gaussian mechanism \cite{mironov2019renyi,zhu2019poission}]\label{lemma:rdp_amplification}
Consider a $\text{PoiSG}$ mechanism and a uniform sampling probability $q$ among all records. For all pairs of adjacent datasets $D, D'$ and integer $\alpha> 1$, we have\footnote{Note that \cite{mironov2019renyi} and \cite{zhu2019poission} demonstrated similar RDP upper bounds for the PoiSG mechanism. The presented Lemma \ref{lemma:rdp_post_processing} is mainly rooted in the findings of \cite{zhu2019poission}. Specifically, when we work with the Gaussian mechanism, Proposition 10 \cite{zhu2019poission} implies Theorem 8 holds and the lower bound in Theorem 6 is a tighter RDP upper bound for the PoiSG mechanism.}
\begin{equation}\begin{split}
\rho_{\text{PoiSG}}(\alpha,q) \leq &\frac{1}{\alpha-1}\log\left\{
(1-q)^{\alpha-1}(\alpha q-q+1) \right.\\ &\left.+\sum_{\ell=2}^\alpha \binom{\alpha}{\ell} (1-q)^{\alpha-\ell}q^\ell e^{(\ell-1)\rho(\ell)}
\right\}.
\end{split}\end{equation}
Note that $\rho(\alpha)=\frac{\alpha}{2\sigma^2}$ for any $\alpha>1$.
\end{lemma}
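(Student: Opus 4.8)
The plan is to follow the reduction-plus-binomial-expansion strategy of Mironov et al. \cite{mironov2019renyi} and Zhu et al. \cite{zhu2019poission}, specialized to integer orders $\alpha$. First I would reduce the bound on $D_\alpha^{\leftrightarrow}$ between the output distributions of $\text{PoiSG}_{\vect{q},\sigma}$ on arbitrary adjacent datasets to a fixed one-dimensional pair of distributions. Because the isotropic Gaussian noise is rotationally symmetric and the worst case saturates the $\ell_2$-sensitivity ($L=1$), a single record's presence shifts the mean by a unit vector; by the post-processing property (Lemma \ref{lemma:rdp_post_processing}) we may project onto the line spanned by that shift without decreasing the divergence. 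After centering, the two relevant distributions become $\mu_0 \triangleq \mathcal{N}(0,\sigma^2)$ (record absent) and the mixture $\mu \triangleq (1-q)\mathcal{N}(0,\sigma^2) + q\,\mathcal{N}(1,\sigma^2)$ (record present, sampled with probability $q$). Since the RDP here is the symmetrized divergence $D_\alpha^{\leftrightarrow}$, I would then argue that $D_\alpha(\mu\|\mu_0)$, in which the record-absent distribution $\mu_0$ serves as the base measure, dominates the reverse divergence $D_\alpha(\mu_0\|\mu)$, so that it suffices to bound the former.

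With the reduction in hand, the core computation is exact for integer $\alpha$. Writing the likelihood ratio as $\mu(z)/\mu_0(z) = (1-q) + q\,E(z)$ with $E(z) \triangleq \mathcal{N}(1,\sigma^2)(z)/\mathcal{N}(0,\sigma^2)(z) = \exp\!\big((2z-1)/(2\sigma^2)\big)$, the binomial theorem gives
\begin{equation*}
\mathbb{E}_{z\sim\mu_0}\!\left[\Big(\tfrac{\mu(z)}{\mu_0(z)}\Big)^\alpha\right]
= \sum_{\ell=0}^{\alpha}\binom{\alpha}{\ell}(1-q)^{\alpha-\ell}q^\ell\,\mathbb{E}_{z\sim\mu_0}\!\left[E(z)^\ell\right].
\end{equation*}
Each moment is then evaluated with the Gaussian moment-generating function: $\mathbb{E}_{z\sim\mathcal{N}(0,\sigma^2)}[E(z)^\ell] = e^{\ell(\ell-1)/(2\sigma^2)} = e^{(\ell-1)\rho(\ell)}$, which recovers the pairwise RDP $\rho(\ell)=\ell/(2\sigma^2)$ of the un-subsampled Gaussian mechanism.

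Finally I would collect the low-order terms. Because $\rho(0)=0$ and $(1-1)\rho(1)=0$, the $\ell=0$ and $\ell=1$ summands equal $(1-q)^\alpha$ and $\alpha q(1-q)^{\alpha-1}$, which combine to $(1-q)^{\alpha-1}(\alpha q - q +1)$, matching the first term of the claimed bound; the terms with $\ell\geq 2$ form the displayed sum. Applying $\tfrac{1}{\alpha-1}\log(\cdot)$ to the moment, together with the definition of RDP and the chosen dominant direction, then yields $\rho_{\text{PoiSG}}(\alpha,q)$ as stated. I expect the genuine difficulty to lie entirely in the reduction step: justifying that an arbitrary adjacency collapses to the one-dimensional unit-shift pair, and proving that $D_\alpha(\mu\|\mu_0)$ is the dominant direction. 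The binomial expansion and the moment evaluation are routine and, crucially, \emph{exact} precisely because $\alpha$ is an integer, so the only inequality in the argument is the one introduced by the worst-case reduction.
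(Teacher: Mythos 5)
The paper itself never proves this lemma: it is a cited preliminary, and the footnote attributes the stated form to the results of Zhu and Wang \cite{zhu2019poission} (their Theorems 6 and 8 together with Proposition 10) and to Mironov et al.\ \cite{mironov2019renyi}. So your attempt should be measured against those cited proofs, and your skeleton is in fact a reconstruction of the route in \cite{mironov2019renyi}: reduce to the one-dimensional pair $\mu_0=\mathcal{N}(0,\sigma^2)$ versus $\mu=(1-q)\mathcal{N}(0,\sigma^2)+q\,\mathcal{N}(1,\sigma^2)$, argue that $D_\alpha(\mu\|\mu_0)$ dominates the reverse direction, and evaluate the moment exactly for integer $\alpha$. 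Your computation is correct: $\mathbb{E}_{z\sim\mu_0}[E(z)^\ell]=e^{\ell(\ell-1)/(2\sigma^2)}=e^{(\ell-1)\rho(\ell)}$, and the $\ell=0,1$ terms do collapse to $(1-q)^{\alpha-1}(\alpha q-q+1)$, reproducing the displayed bound.

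There are, however, two genuine gaps, sitting exactly at the steps you defer. First, your justification of the dimension reduction is inverted. Post-processing (Lemma \ref{lemma:rdp_post_processing}, i.e.\ the data-processing inequality) applied to the projection says the \emph{projected} divergence is at most the $d$-dimensional one, i.e.\ $D_\alpha(\mu\|\mu_0)\leq D_\alpha(P_d\|Q_d)$ --- the opposite of what the proof needs. The correct argument is either factorization (after rotation, both $d$-dimensional distributions are products of the one-dimensional pair with a \emph{common} Gaussian factor on the orthogonal complement, and R\'enyi divergence is additive over products, the common factor contributing zero), or post-processing applied in the reverse direction: the $d$-dimensional pair is obtained from the one-dimensional pair by the data-independent randomized map that appends independent $\mathcal{N}(0,\sigma^2 I_{d-1})$ noise and rotates. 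Your reduction also skips the step handling the other records: the actual output distributions are mixtures over all $2^{N-1}$ sampling patterns of the remaining records, and one needs joint quasi-convexity of R\'enyi divergence over mixtures with common weights to pass to the conditional two-component pair, plus a monotonicity-in-shift argument to saturate the sensitivity at $\|v\|=1$. Second, the dominant-direction claim $D_\alpha(\mu\|\mu_0)\geq D_\alpha(\mu_0\|\mu)$ is not a routine observation; it is a standalone theorem in \cite{mironov2019renyi} with a genuinely involved proof, and merely asserting it leaves the symmetrized bound $D_\alpha^{\leftrightarrow}$ required by Definition \ref{def:renyi_dp} unestablished. Neither gap is fatal --- both are filled in the cited sources --- but as written your proposal proves the bound only for one divergence direction, and only once the reduction argument is repaired.
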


\noindent\textbf{Privacy Bounds Visualization.} 
The privacy guarantee under RDP can be depicted as a curve of R\'enyi divergence, aka., the \textit{RDP budget curve}, over the continuous range of $\alpha$ values \cite{mironov2017renyi}. For a clear understanding, we visualize the RDP budget curve of the PoiSG mechanism with uniform sampling probability $q$ in Figure \ref{fig:rdp_dp_curves_sgd} (left). When $q=1$, the PoiSG mechanism is reduced to a standard Gaussian mechanism whose RDP budget curve is a straight line \cite{mironov2017renyi}. For the RDP budget curves with $q<1$, there exists a phase transition that happens around $\alpha q e^{\rho(\alpha)}\approx q^{-1}$ \cite{zhu2019poission}. As $q$ gets larger, this transition tends to appear earlier and get more indistinct. 

Based on Lemma \ref{lemma:rdp_dp_transition}, we can obtain the corresponding \textit{DP budget curve} given a desired $\delta$ and then find the smallest $\varepsilon$ by solving the optimization problem below \cite{abadi2016deep}:
\begin{small}\begin{align}
&\varepsilon^* \triangleq \min_{\alpha}\left\{ \rho + \frac{\log(1/\delta)}{\alpha-1}\right\}.
\end{align}\end{small}
Corollary 38 in \cite{wang2019subsampled} proves the unimodality/quasi-convexity of this optimization problem. Figure \ref{fig:rdp_dp_curves_sgd} (right) demonstrates the existence of an optimal order $\alpha^*$, corresponding to the minimum $\varepsilon^*$.

\begin{remark} 
Instead of exploring an infinite range of real numbers $\alpha\in (1,\infty)$, practitioners often opt to predefine a finite collection of RDP orders to effectively capture the minimum $\varepsilon^*$. This trick has been implemented in leading DP libraries such as Opacus\footnote{https://github.com/pytorch/opacus}, Tensorflow Privacy\footnote{https://github.com/tensorflow/privacy}, etc.
\end{remark}

\begin{figure}[t]\centering
\includegraphics[width=0.86\linewidth]{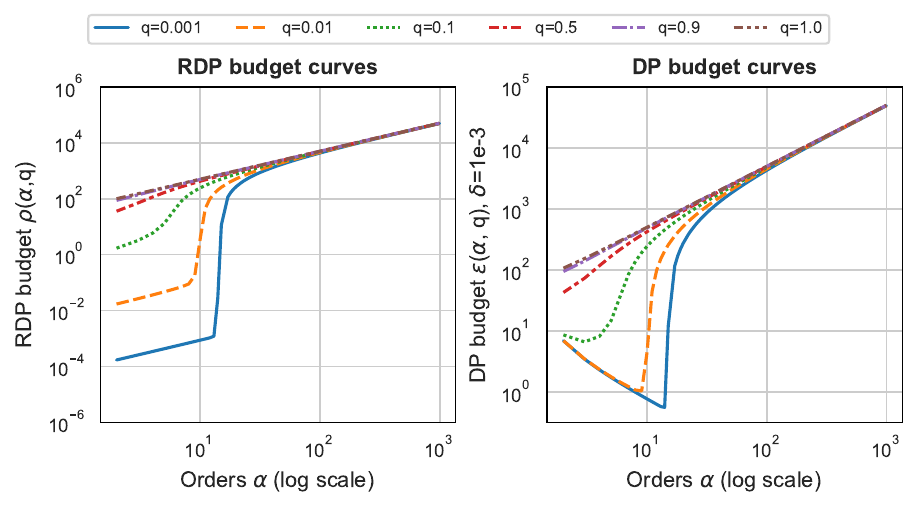}
\caption{The RDP and DP budget curves w.r.t. order $\alpha$ and sampling probability $q$ of a sequential combination of $T$=100 PoiSG mechanisms with noise multiplier $\sigma$=1.0 and $\delta$=1e-3.}
\Description{The RDP and DP budget curves w.r.t. order $\alpha$ and sampling probability $q$ of a sequential combination of $T$=100 PoiSG mechanisms with noise multiplier $\sigma$=1.0 and $\delta$=1e-3.}
\label{fig:rdp_dp_curves_sgd}
\end{figure}

\section{Federated Learning with Record-level Personalized DP}\label{sec:problem_formulation}

We target the typical supervised FL task with a central server and a set of $M$ clients $\mathcal{C}=\{C_1,\dots, C_M\}$. Consider each client $C_i\in \mathcal{C}$ holds a private training dataset $D_i=\{d_{i,1},\dots,d_{i,N_i}\}$. Each record $d_{i,j}\in D_i$ is associated with a privacy budget $\varepsilon_{i,j}>0$, which reflects the privacy preference of the record's owner. Our goal is to learn a globally shared model with parameters $\x\in \mathbb{R}^d$ by solving the following empirical risk problem
\begin{small}\begin{align}\label{eq:erm_obj}
&\min_{\x\in\mathbb{R}^d}\left\{\mathcal{L}(\x) \triangleq \frac{1}{M}\sum_{i=1}^M \mathcal{L}_i(\x; D_i)\right\}, \\
&\text{where } \mathcal{L}_i(\x; D_i) \triangleq \frac{1}{N_i} \sum_{j=1}^{N_i} l(\x, d_{i,j}),
\end{align}\end{small}
with the privacy guarantee of record-level personalized differential privacy (as stated in Definition \ref{def:federated_pdp} below). Here $l(\cdot)$ denotes the loss function used for local optimization.

\begin{definition}[Federated Learning with Record-level Personalized Differential Privacy] \label{def:federated_pdp}
Given $\delta\geq 0$. Let $\mathcal{D}=\bigcup_{i=1}^M D_i$ and $\mathcal{E}=\bigcup_{i=1}^M \{\varepsilon_{i,j}\}_{j\in[N_i]}$. A randomized FL algorithm $\mathcal{A}_{FL}:\mathbb{D}\rightarrow\mathbb{O}$ satisfies $(\mathcal{E},\delta)$-record-level personalized differential privacy (rPDP) if it guarantees $(\varepsilon_{i,j},\delta)$-DP w.r.t. the specific record $d_{i,j}$, i.e., for any pair of record-specific adjacent datasets $\smash{\mathcal{D}\stackrel{d_{i,j}}{\sim}\mathcal{D}_{-i,j}}$ and any subsets of output $o\subseteq \mathbb{O}$, it holds that 
\begin{equation}
\Pr[\mathcal{A}_{FL}(\mathcal{D}) \in o] \leq e^{\varepsilon_{i,j}} \Pr[\mathcal{A}_{FL}(\mathcal{D}_{-i,j}) \in o] + \delta,
\end{equation}
where $\mathcal{D}_{-i,j} \triangleq D_{i,-j}\cup \{\cup_{m\neq j} D_m\}$ and $D_{i,-j} \triangleq D_i\setminus \{d_{i,j}\}$.
\end{definition}

\begin{figure}[t]\centering
\includegraphics[width=\linewidth]{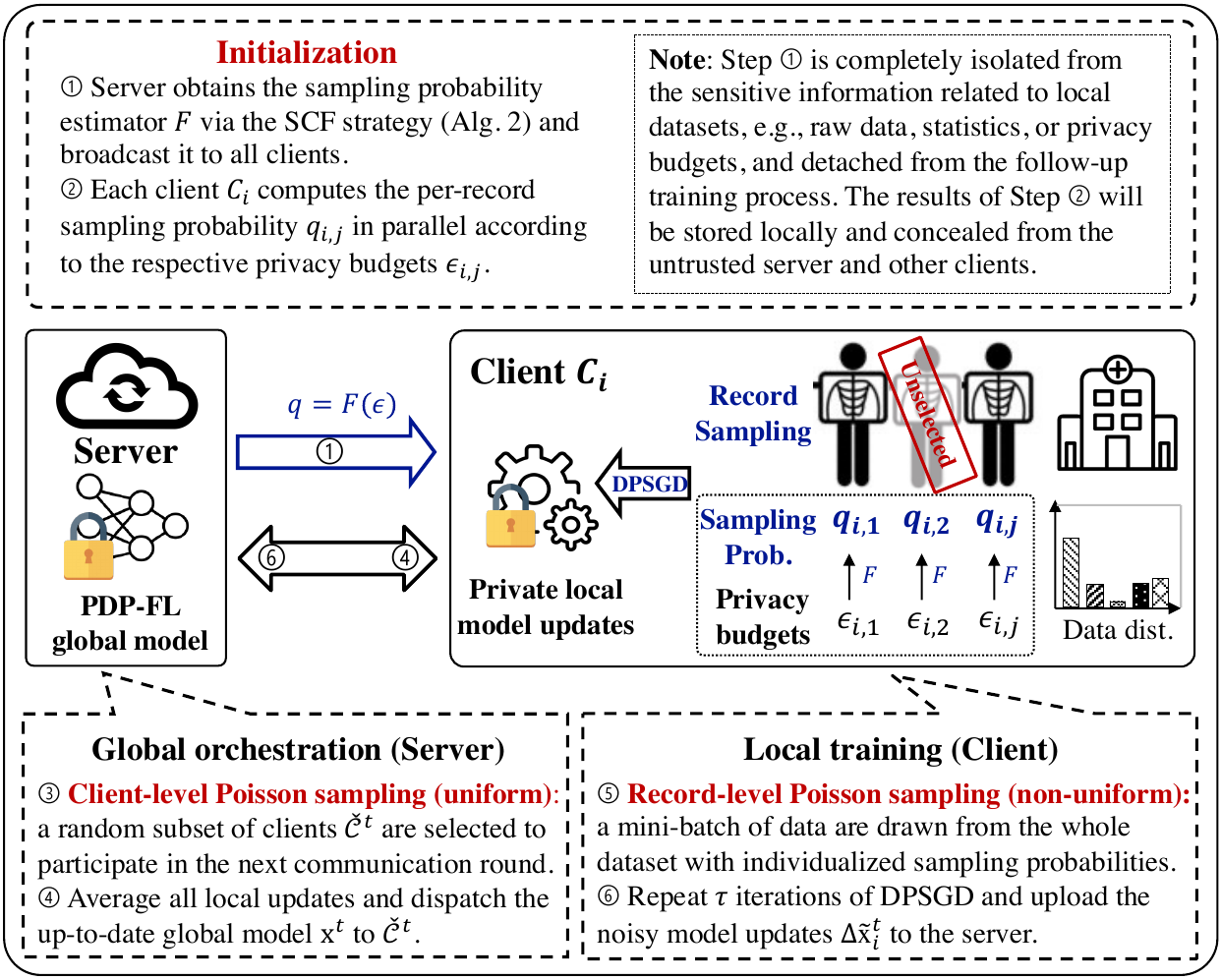}
\caption{A step-by-step illustration of the rPDP-FL algorithm.}
\Description{A step-by-step illustration of the rPDP-FL algorithm.}
\label{fig:rpdpfl_framework}
\end{figure}

\noindent\textbf{FedAvg and Two-stage Sampling Scheme.} The most fundamental approach for solving the non-private optimization problem in Eq. (\ref{eq:erm_obj}) is federated averaging (FedAvg) \cite{mcmahan2017communication}. Despite numerous improved methods being proposed to tackle FedAvg's limitations like data heterogeneity or communication efficiency \cite{karimireddy2020scaffold,li2020federated, hanzely2020lower,li2021ditto}, most of them still adhere to a \textit{two-stage sampling} scheme (i.e., outer client sampling followed by the inner record sampling) together with SGD-based learning paradigm. Given the primary aim of this work is the establishment of a record-level personalized privacy protection mechanism, we choose FedAvg as the backbone of our FL framework and adopt DP-SGD \cite{abadi2016deep} during each client's local training process to achieve record-level protection. We expect our proposed rPDP-FL to be extendable to work with other two-stage sampling FL methods listed above. 

\subsection{Solution Overview}
We employ a two-stage \textit{hybrid} sampling scheme within the FedAvg algorithm to obtain a global model using clients' local datasets while ensuring diverse individual privacy preferences. This innovative framework for private FL, termed \textit{rPDP-FL}, differs from the classic approach in three key aspects:

\begin{itemize}[leftmargin=*]
\item \textbf{Initialization}: 
Each client allocates a customized sampling probability $q_{i,j}$ to every record in its local dataset, tailored to the record's specific privacy budget.
\item \textbf{Stage 1: Client-level Poisson sampling (uniform)}: at the beginning of round $t\in[T]$, the central server selects a random subset of clients $\check{\mathcal{C}}^t$ via Poisson sampling with \textit{uniform} per-client sampling probability $\lambda\in[0,1]$ and dispatches the up-to-date global model $\x^t$ to these selected clients.
\item \textbf{Stage 2: Record-level Poisson sampling (non-uniform)}: each client selected in the above stage performs a certain number of DP-SGD iterations locally and independently and uploads the model updates to the central server. During each iteration, the mini-batches are drawn from the whole local dataset via Poisson sampling with non-uniform per-record sampling probabilities.
\end{itemize}

It's worth highlighting that rPDP-FL solely alters the sampling processes (except for the initialization step) and remains detached from the intricacies associated with the learning process. This feature enables its broader applicability to any non-private FL frameworks that incorporate a two-stage sampling process, \textcolor{black}{as illustrated in Figure \ref{fig:rpdpfl_framework}.} The pseudocode is presented in Algorithm \ref{alg:pers_dp_fedavg_pesudo} and the complete version will be shown in Algorithm \ref{alg:pers_dp_fedavg}. 

\begin{algorithm}[t] \small
\caption{Record-level Personalized Differentially Private Federated Learning (rPDP-FL, Pseudocode)}
\label{alg:pers_dp_fedavg_pesudo}
\DontPrintSemicolon
\SetKwInOut{Input}{input}
\SetKwInOut{Parameter}{parameter}
\SetKwInOut{Output}{output}
\Input{$M$ clients with their local datasets $(D_1,\dots,D_M)$; the total communication round $T$ and the local SGD step $\tau$; the client-level sampling probability $\lambda$.}
\color{blue}\tcp{Initialization}\color{black}
\ForEach(\emph{in parallel}){client $C_i \in \mathcal{C}$ }{
    $\{q_{i,j}\}_{j\in[|D_i|]}\leftarrow$ (pre-computation of sampling probabilities for all records) \;
}
\For{$t\in[T]$ }{
    \color{blue}\tcp{Client-level Poisson sampling with the uniform sampling probability $\lambda$}\color{black}
    $\check{\mathcal{C}}^t\leftarrow$ (a random subset drawn from $[M]$) \;
    \ForEach(\emph{in parallel}){client $C_i \in \check{\mathcal{C}}^t$ }{
        \For{$r\in[\tau]$ }{
            \color{blue}\tcp{Record-level Poisson sampling with non-uniform sampling probabilities $\{q_{i,j}\}_{j\in[|D_i|]}$}\color{black}
            $S^r\leftarrow$ (a random mini-batch drawn from $D_i$) \;
            \tcp{Differentially private SGD}
        }
    }
    \tcp{The central server averages the collected noisy model updates and obtains the updated global model parameters}
}
\end{algorithm}

\subsection{Challenges} \label{method:challenges}

To offer reasonable personalized privacy guarantees while maintaining the utility of the global model, the development of Algorithm \ref{alg:pers_dp_fedavg_pesudo} faces a \textit{dual} challenge in both theory and practice.
\begin{itemize}[leftmargin=*]
    \item \textbf{The privacy analysis challenge.} From a theoretical perspective, it is essential to establish as ``tight'' upper bounds as possible for the overall privacy cost of each individual to enhance the trade-off between privacy and utility.
    \item \textbf{The parameter estimation challenge.} For practical purposes, an efficient and effective parameter estimation strategy must be adopted to select appropriate hyperparameters for the privacy algorithm, i.e., determining sampling probabilities for all records. 
\end{itemize}

Our research aims to explore the mathematical relationship between privacy cost and their sampling probabilities. In particular, we provide the privacy analysis given the sampling probability of each record, as detailed in Section \ref{sec:privacy}. Furthermore, we outline our approach for deriving the sampling probabilities in accordance with the predetermined individualized privacy budgets in Section \ref{sec:solution}.



\section{Privacy Analysis}\label{sec:privacy}

\subsection{Privacy Objectives and Key Results} We will analyze the upper bound of the accumulative privacy cost for any single record $d_{i,j}$ in Algorithm \ref{alg:pers_dp_fedavg_pesudo}, assuming its sampling probability $q_{i,j}$ is given. This can be broken down into the following three basic nested routines: 
\begin{enumerate}[leftmargin=*]
\item \textit{Local multi-step update}, which can be abstracted as an adaptive combination of $\tau$ PoiSG mechanisms.
\item \textit{Global parameter aggregation}, which can be seen as a multi-phase procedure involving uniform client sampling and data-independent post-processing of the results derived from the local update performed by the chosen clients.
\item \textit{Global multi-round update}, which can be perceived as an adaptive combination of $T$ parameter aggregation mechanisms above.
\end{enumerate}

\noindent\textbf{Symbolic Representations.} Without loss of generality, our focus will be primarily on the first record $d_{1,1}\in D_1$ of client $C_1$. For the sake of conciseness, we use the symbolic representations as follows.
\begin{itemize}[leftmargin=*]
\item $\mathcal{D}\triangleq\bigcup_{i=1}^M D_i$: the federated dataset.
\item $\mathcal{D}, \mathcal{D}_{-1,1}$: the adjacent \textit{federated} datasets concerning a specific data record $d_{1,1}$, i.e., $\smash{\mathcal{D}\stackrel{d_{1,1}}{\sim}\mathcal{D}_{-1,1}}$.
\item $D_1, D_{1,-1}$: the adjacent \textit{local} datasets at client $C_1$ concerning a specific data record $d_{1,1}$, i.e., $\smash{D_1\stackrel{d_{1,1}}{\sim}D_{1,-1}}$.
\item $\text{CSamp}_\lambda(D_1,\dots, D_M)$: client sampling using uniform Poisson sampling, where $\lambda\in (0,1]$ denotes the sampling probabilities for all clients.
\item $\text{RSamp}_{\mathbf{q}_1}(D_1)$: record sampling at client $C_1$ using non-uniform Poisson sampling, where $\mathbf{q}_1=\{q_{1,1},\dots,q_{1,|D_1|}\}$ denotes the set of sampling probabilities of each record $d_{1,j}\in D_1$. 
\item $\mathcal{A}_G(\cdot) \triangleq f(\cdot) + \mathcal{N}(0,\sigma^2\mathbb{I})$: the Gaussian mechanism satisfying $(\alpha,\rho_G)$-RDP, where $\rho_G(\alpha)\triangleq \frac{\alpha L^2}{2\sigma^2}$ and $L$ is the the sensitivity of function $f$ \cite{mironov2017renyi}. For simplicity, we assume that $L=1$ through the rest of this section.
\item $\mathcal{A}(\cdot) \triangleq \mathcal{A}_G(\text{RSamp}_{\mathbf{q}_1}(\cdot))$: the PoiSG mechanism.
\item $\mathcal{A}_{in}(\cdot) \triangleq \mathcal{A}^{\otimes \tau} = (\mathcal{A}_1(\cdot),\mathcal{A}_2(\mathcal{A}_1(\cdot),\cdot),\dots,\mathcal{A}_\tau(\mathcal{A}_1(\cdot),\dots,\cdot))$: local multi-step update.
\item $\mathcal{A}_{out}(\cdot) \triangleq \text{Avg}(\text{CSamp}_\lambda\circ\mathcal{A}_{in})$: global parameter aggregation.
\item $\mathcal{A}_{FL}(\cdot) \triangleq \mathcal{A}_{out}(\cdot)^{\otimes t}$: global multi-round update.
\end{itemize}

To enjoy the strength of tight privacy accounting offered by the RDP privacy analysis framework, we need to overcome the incompatibility challenge that all existing PDP techniques fail to provide tight privacy analysis under the RDP framework. In detail, we consider analyzing the RDP bound of $\mathcal{A}_{FL}$ first, and then convert it into the form of a standard DP guarantee by applying Lemma \ref{lemma:rdp_dp_transition}. For example, in order to show that $\mathcal{A}_{FL}$ satisfies ($\varepsilon_{1,1},\delta$)-DP w.r.t. $d_{1,1}$, we need to show that for any pair of adjacent federated datasets $\mathcal{D}, \mathcal{D}_{-1,1}$ and arbitrary output $o$, we have
\begin{small}\begin{align}
&D(\mathcal{A}_{FL}(\mathcal{D}) \| \mathcal{A}_{FL}(\mathcal{D}_{-1,1})) \leq \rho_{FL}, \nonumber\\
&\text{s.t.}\quad \min_{\alpha>1}\left\{ \rho_{FL} + \frac{\log(1/\delta)}{\alpha-1}\right\} \leq \varepsilon_{1,1}. \label{eq:rdp_dp_trans}
\end{align}\end{small}

\noindent\textbf{Privacy Objectives.} As previously discussed in Section \ref{sec:related_work}, FL scenarios typically account for two distinct types of potential adversaries. Consequently, the objectives of privacy analysis can be categorized into the following.
\begin{enumerate}[leftmargin=*]
\item \textit{Type I privacy analysis} against the honest-but-curious server (which has access to the intermediate model updates): given the local model parameter $\x_1^t\sim\mathcal{A}_{in}(D_1)$ uploaded by Client $C_1$, the individual RDP privacy bound of $\mathcal{A}_{FL}$ for record $d_{1,1}$ is 
\begin{small}\begin{align*}
D_\alpha(\mathcal{A}_{FL}(\mathcal{D}) &\| \mathcal{A}_{FL}(\mathcal{D}_{-1,1})) \\
&= \lambda T\cdot D_\alpha(\mathcal{A}_{in}(D_1) \| \mathcal{A}_{in}(D_{1,-1})) \leq \rho_{I}.
\end{align*}\end{small}
\item \textit{Type II privacy analysis} against untrusted clients or third parties (which have access to the intermediate or final global model): for any global model parameter $\x^t\sim\mathcal{A}_{FL}(\mathcal{D})$, the individual RDP privacy bound of $\mathcal{A}_{FL}$ for record $d_{1,1}$ is 
\begin{small}\begin{align*}
D_\alpha(\mathcal{A}_{FL}(\mathcal{D}) &\| \mathcal{A}_{FL}(\mathcal{D}_{-1,1})) \\
&= T\cdot D_\alpha(\mathcal{A}_{out}(\mathcal{D}) \| \mathcal{A}_{out}(\mathcal{D}_{-1,1})) \leq \rho_{II}.
\end{align*}\end{small}
\end{enumerate}

\noindent\textbf{Key Results.} The key results are presented below. The detailed proofs will be presented in the next subsection. 

\begin{lemma}\label{lemma:post_process}
Suppose that $\check{\mathcal{C}}^t$ is a subset of clients selected at round $t\in [T]$. The simple average operation $\text{Avg}(\cdot)$ over all outputs $o_i \sim \mathcal{A}_{in}(D_i)$, where $C_i\in \check{\mathcal{C}}^t$, will not incur any extra privacy cost to \textcolor{black}{all records $d_{i,j}\in\mathcal{D}$}. 
\end{lemma}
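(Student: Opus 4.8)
The plan is to recognize $\text{Avg}(\cdot)$ as a data-independent post-processing map and invoke the post-processing guarantee of Lemma \ref{lemma:rdp_post_processing}. Throughout, I condition on the realized client subset $\check{\mathcal{C}}^t$. This conditioning is legitimate because the client-level Poisson sampling $\text{CSamp}_\lambda$ inspects only client identities and never the underlying records; hence $\check{\mathcal{C}}^t$ is statistically independent of every record $d_{i,j}$, and in particular its cardinality $|\check{\mathcal{C}}^t|$ is a fixed constant once conditioned.

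First, I fix an arbitrary target record $d_{i,j}\in\mathcal{D}$ together with its record-specific adjacent pair $\mathcal{D}\stackrel{d_{i,j}}{\sim}\mathcal{D}_{-i,j}$; by Definition \ref{def:federated_pdp} these differ only in the local dataset of client $C_i$ (that is, $D_i$ versus $D_{i,-j}$), while $D_m$ for $m\neq i$ is identical in both. I then split into two cases. If $C_i\notin\check{\mathcal{C}}^t$, every uploaded output is generated from datasets untouched by the removal of $d_{i,j}$, so the two aggregated distributions coincide and the divergence is zero---no cost at all. If $C_i\in\check{\mathcal{C}}^t$, only the single output $o_i\sim\mathcal{A}_{in}(D_i)$ is affected, while the remaining outputs $\{o_m\}_{m\neq i}$ are unchanged and independent of $o_i$.

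Second, I assemble the tuple-valued mechanism $\mathcal{M}(\cdot)$ that, conditioned on $\check{\mathcal{C}}^t$, emits $(o_m)_{C_m\in\check{\mathcal{C}}^t}$ with each $o_m\sim\mathcal{A}_{in}(D_m)$, and observe that the averaging rule $g:(o_m)_m\mapsto \frac{1}{|\check{\mathcal{C}}^t|}\sum_{C_m\in\check{\mathcal{C}}^t} o_m$ is a deterministic function of these outputs alone---it never queries any $D_m$---so it qualifies as an arbitrary data-independent randomized mapping in the sense required by Lemma \ref{lemma:rdp_post_processing}. Since the averaged output is exactly $g(\mathcal{M}(\cdot))$, the post-processing lemma yields $D_\alpha^{\leftrightarrow}(g(\mathcal{M}(\mathcal{D}))\|g(\mathcal{M}(\mathcal{D}_{-i,j})))\leq D_\alpha^{\leftrightarrow}(\mathcal{M}(\mathcal{D})\|\mathcal{M}(\mathcal{D}_{-i,j}))$, i.e.\ averaging cannot increase the RDP cost. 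Because the unaffected coordinates $\{o_m\}_{m\neq i}$ are shared across the two adjacent instances and independent of $o_i$, they contribute nothing to the divergence of $\mathcal{M}$, so the right-hand side equals the cost of the single branch $\mathcal{A}_{in}(D_i)$ versus $\mathcal{A}_{in}(D_{i,-j})$. This confirms that the aggregation step is free for every record.

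The step I expect to require the most care is the justification that $g$ is genuinely data-independent: one must verify that the normalizing denominator $|\check{\mathcal{C}}^t|$ depends only on the data-independent client-sampling outcome and not on any record, and that deleting $d_{i,j}$ leaves both $\check{\mathcal{C}}^t$ and the averaging weights unchanged, so that the map commutes with the record-level adjacency exactly as post-processing demands. Once this conditioning is made rigorous, the remainder is a direct instantiation of Lemma \ref{lemma:rdp_post_processing}.
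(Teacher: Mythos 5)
Your proposal is correct and follows essentially the same route as the paper: the paper's proof is a one-line appeal to the post-processing property of RDP (Lemma \ref{lemma:rdp_post_processing}), which is exactly the lemma you instantiate. Your additional bookkeeping---conditioning on the data-independent client subset $\check{\mathcal{C}}^t$, the case split on whether $C_i$ is selected, and the product-measure argument showing the unaffected coordinates contribute zero divergence---simply makes explicit the details the paper leaves implicit, and it matches how the paper later invokes this lemma (conditionally on the sampling outcome $\omega$) in the proof of Lemma \ref{lemma:enhan_indiv_priv_amp}.
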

\begin{proof} The proof follows from the fact that the RDP guarantee is preserved under post-processing, as shown in Lemma \ref{lemma:rdp_post_processing}.
\end{proof}

\begin{lemma}[Enhanced Privacy Amplification by Two-Stage Hybrid Sampling]
\label{lemma:enhan_indiv_priv_amp}
Assume the sampling probability for any clients is $\lambda\in(0,1]$, and the sampling probability for data record $d_{i,j}\in D_i$ is $q_{i,j}\in (0,1]$. If a random algorithm $\mathcal{A}_{in}(D_i,\x^{t-1})$ satisfies $(\alpha, \rho_{i,j}^{\tau})$-RDP w.r.t $d_{i,j}$, then the algorithm $\mathcal{A}_{out}(\mathcal{D})$ satisfies $(\alpha, \rho_{i,j}^{\tau,\lambda})$-RDP w.r.t. $d_{i,j}$, where
\begin{small}\begin{align*}
\rho_{i,j}^{\tau,\lambda}(\alpha,q_{i,j}) 
&\leq \frac{1}{\alpha-1} \ln\left\{1-\lambda+\lambda e^{(\alpha-1)\rho_{i,j}^{\tau}(\alpha,q_{i,j})} \right\}, \text{~and~}
\end{align*}\end{small}
\begin{small}\begin{align*}
\rho_{i,j}^{\tau}(\alpha,q_{i,j}) \leq \frac{\tau}{\alpha-1}\ln&
\left\{(1-q_{i,j})^{\alpha-1}(\alpha q_{i,j}-q_{i,j}+1) \right.\nonumber\\ 
&\left. +\sum_{\ell=2}^\alpha \binom{\alpha}{\ell} (1-q_{i,j})^{\alpha-\ell}q_{i,j}^\ell e^{(\ell-1)\rho_G(\ell)}
\right\}.
\end{align*}\end{small}
\end{lemma}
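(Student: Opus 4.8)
The plan is to prove the two displayed bounds separately, working from the inside out.

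\textbf{Second bound (local multi-step update).} I would first pin down the RDP guarantee of a single PoiSG mechanism $\mathcal{A}$ with respect to the specific record $d_{i,j}$. The crucial observation is that for the record-specific adjacent pair $D_i \stackrel{d_{i,j}}{\sim} D_{i,-j}$, every record other than $d_{i,j}$ is sampled identically under both datasets, so the only sampling probability that influences the two output distributions is $q_{i,j}$. Consequently Lemma~\ref{lemma:rdp_amplification} applies verbatim with $q = q_{i,j}$, giving $\mathcal{A}$ the per-step bound $\rho_{\text{PoiSG}}(\alpha, q_{i,j})$. Since $\mathcal{A}_{in} = \mathcal{A}^{\otimes\tau}$ is an adaptive composition of $\tau$ copies of $\mathcal{A}$, Lemma~\ref{lemma:adaptive_comp} yields $\rho_{i,j}^{\tau}(\alpha, q_{i,j}) \leq \tau\,\rho_{\text{PoiSG}}(\alpha, q_{i,j})$, which is exactly the claimed expression once I substitute $\rho_G(\ell) = \frac{\ell}{2\sigma^2}$ for the $\rho(\ell)$ appearing in Lemma~\ref{lemma:rdp_amplification}.

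\textbf{First bound (client-level amplification).} Here I would exploit the clean mixture structure induced by uniform client sampling. First, by Lemma~\ref{lemma:post_process} the averaging $\text{Avg}$ is data-independent post-processing, so by Lemma~\ref{lemma:rdp_post_processing} it suffices to bound the RDP of the pre-averaging transcript $\text{CSamp}_\lambda \circ \mathcal{A}_{in}$. Because the adjacent federated datasets $\mathcal{D}, \mathcal{D}_{-i,j}$ differ only inside client $C_i$, and each client's sampling decision and local computation use independent randomness, the joint transcript is a product over clients whose factors coincide for all $k \neq i$; these contribute zero R\'enyi divergence, so the problem reduces to $C_i$'s contribution. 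That contribution is the mixture $P_i = (1-\lambda)\delta_\bot + \lambda P_1$ versus $Q_i = (1-\lambda)\delta_\bot + \lambda Q_1$, where $P_1 \triangleq \mathcal{A}_{in}(D_i)$, $Q_1 \triangleq \mathcal{A}_{in}(D_{i,-j})$, and $\delta_\bot$ is the (observable) ``$C_i$ not sampled'' outcome, identical under both datasets. I would then compute $\mathbb{E}_{o\sim Q_i}[(P_i(o)/Q_i(o))^\alpha]$ directly by splitting over the two parts: on $\delta_\bot$ the likelihood ratio is exactly $1$, contributing $1-\lambda$; on the sampled part the $\lambda$ factors cancel, contributing $\lambda\,\mathbb{E}_{Q_1}[(P_1/Q_1)^\alpha] = \lambda\, e^{(\alpha-1)D_\alpha(P_1\|Q_1)} \le \lambda\, e^{(\alpha-1)\rho_{i,j}^{\tau}}$. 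Summing gives $e^{(\alpha-1)\rho_{i,j}^{\tau,\lambda}} \le 1-\lambda + \lambda\, e^{(\alpha-1)\rho_{i,j}^{\tau}}$, which is the first bound; the reverse direction $D_\alpha(Q_i\|P_i)$ is identical by symmetry, so the two-sided RDP obeys the same bound.

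The step I expect to be the main obstacle is the reduction in the first bound: rigorously justifying that only $C_i$'s factor carries divergence (the product/independence argument together with the post-processing of $\text{Avg}$) and that the unsampled outcome $\delta_\bot$ is a distinguishable mixture component with likelihood ratio exactly $1$. If the unsampled and sampled outputs shared support in a non-separable way --- as happens in record-level subsampling --- the clean additive split would break and one would be forced into the more intricate binomial expansion of Lemma~\ref{lemma:rdp_amplification}. I would therefore argue carefully that the server's knowledge of the sampled client set makes the ``$C_i$ not sampled'' event genuinely observable in the transcript, which is precisely what renders the mixture computation exact and yields the simple $1-\lambda+\lambda e^{(\alpha-1)\rho_{i,j}^{\tau}}$ form.
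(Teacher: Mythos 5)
Your proposal is correct, and for the local-update bound it coincides with the paper's argument: restrict attention to the record-specific adjacent pair so that only $q_{i,j}$ matters, invoke Lemma~\ref{lemma:rdp_amplification} per step, and compose $\tau$-fold via Lemma~\ref{lemma:adaptive_comp}. For the client-level bound, however, you take a genuinely different route. The paper never reveals the client-sampling outcome: it works directly with the mixture distributions $\Phi=\sum_\omega p_\omega H_\omega$ and $\Psi=\sum_\omega p_\omega H'_\omega$ of the \emph{averaged} output, pulls the expectation over $\omega$ outside via Jensen's inequality and the joint convexity of $(x,y)\mapsto x^\alpha/y^{\alpha-1}$ (Lemma 22 in \cite{wang2019subsampled}), observes that the $\omega_1=0$ terms contribute ratio $1$, and bounds the $\omega_1=1$ terms by $\rho_{i,j}^{\tau}$ using post-processing; it also presents this for a two-client special case and then asserts the general extension. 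You instead augment the released output with the sampled-client set and the individual updates, reduce via product additivity of R\'enyi divergence to client $C_i$'s factor, and compute the disjoint-support two-point mixture $(1-\lambda)\delta_\bot+\lambda P_1$ versus $(1-\lambda)\delta_\bot+\lambda Q_1$ exactly. Both routes yield the identical bound $\frac{1}{\alpha-1}\ln\{1-\lambda+\lambda e^{(\alpha-1)\rho_{i,j}^{\tau}}\}$. One clarification that dissolves the obstacle you flagged: you do not need the ``$C_i$ not sampled'' event to be \emph{genuinely} observable in the real mechanism. Since $\text{Avg}$ applied to the sampled updates is a deterministic, data-independent function of your augmented transcript, Lemma~\ref{lemma:rdp_post_processing} guarantees that bounding the RDP of the augmented release suffices; pessimistically revealing the sampling indicator is always sound, and it is precisely this augmentation that makes your mixture computation exact. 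What each approach buys: yours is more elementary (no convexity lemma needed), handles arbitrary $M$ in one stroke, and makes transparent why the client-level amplification factor has the clean $1-\lambda+\lambda e^{(\alpha-1)\rho}$ form rather than the binomial expansion of record-level subsampling; the paper's convexity argument shows the same bound holds for the un-augmented mechanism directly, without routing through a worst-case release.
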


\begin{theorem}[Individual Privacy Analysis in Federated Learning]\label{theorem:indiv_priv_analysis}
For any $\delta\in (0,1)$, the random algorithm $\mathcal{A}_{FL}(\mathcal{D})$ satisfies $(\hat{\varepsilon}_{i,j}^*, \delta)$-DP w.r.t. a specific record $d_{i,j}\in \mathcal{D}$, where
\begin{small}\begin{align}\label{eq:final_epsilon}
\hat{\varepsilon}_{i,j}^*\triangleq \min_{\alpha} \left(\rho_{FL}(\alpha,q_{i,j})+\frac{\ln(1/\delta)}{\alpha-1}\right).
\end{align}\end{small}
Note that:
(1) for untrusted clients or third parties, $\rho_{FL}(\alpha,q_{i,j}) \triangleq T\rho_{i,j}^{\tau,\lambda}(\alpha,q_{i,j})$;
(2) for the honest-but-curious server, $\rho_{FL}(\alpha,q_{i,j}) \triangleq \lambda T\rho_{i,j}^{\tau}(\alpha,q_{i,j})$.
\end{theorem}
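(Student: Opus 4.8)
The plan is to propagate the Rényi bound through the three nested routines identified in the decomposition---single PoiSG step, local multi-step update $\mathcal{A}_{in}$, aggregation $\mathcal{A}_{out}$, and finally the $T$-round process $\mathcal{A}_{FL}$---fixing one order $\alpha>1$ and the target record $d_{i,j}$ throughout, and converting to $(\varepsilon,\delta)$-DP only at the very end. First I would note that a single step $\mathcal{A}=\mathcal{A}_G\circ\text{RSamp}_{\mathbf{q}_i}$ is $(\alpha,\rho_{\text{PoiSG}}(\alpha,q_{i,j}))$-RDP w.r.t.\ $d_{i,j}$ by Lemma~\ref{lemma:rdp_amplification}, with the Gaussian part contributing $\rho_G(\ell)=\ell/(2\sigma^2)$. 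Composing the $\tau$ inner steps---adaptive because each mini-batch gradient is evaluated at the current local iterate---via Lemma~\ref{lemma:adaptive_comp} shows $\mathcal{A}_{in}=\mathcal{A}^{\otimes\tau}$ is $(\alpha,\rho_{i,j}^{\tau})$-RDP, reproducing the second displayed bound of Lemma~\ref{lemma:enhan_indiv_priv_amp}.

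Next I would pass to the aggregation round. Lemma~\ref{lemma:enhan_indiv_priv_amp} converts the local cost $\rho_{i,j}^{\tau}$ into the client-subsampled cost $\rho_{i,j}^{\tau,\lambda}$, while Lemma~\ref{lemma:post_process} (itself an instance of the post-processing property, Lemma~\ref{lemma:rdp_post_processing}) certifies that the data-independent average $\text{Avg}(\cdot)$ adds nothing; hence $\mathcal{A}_{out}$ is $(\alpha,\rho_{i,j}^{\tau,\lambda})$-RDP. For the Type~II adversary (untrusted clients or third parties, who see only the aggregated global model and cannot tell whether $d_{i,j}$'s client participated), $\mathcal{A}_{FL}=\mathcal{A}_{out}^{\otimes T}$ is an adaptive composition of $T$ such rounds, each consuming the previous global model, so a second use of Lemma~\ref{lemma:adaptive_comp} gives $\rho_{FL}(\alpha,q_{i,j})=T\rho_{i,j}^{\tau,\lambda}$. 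For the Type~I adversary (the honest-but-curious server, which performs the client sampling and receives the raw local update $\mathcal{A}_{in}(D_i)$ whenever $C_i$ is selected), the averaging offers no protection and the subsampling amplification is unavailable; the record is exposed only in the rounds in which its client participates, so accounting the per-participation cost $\rho_{i,j}^{\tau}$ over the expected $\lambda T$ participations yields $\rho_{FL}(\alpha,q_{i,j})=\lambda T\rho_{i,j}^{\tau}$.

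With $\rho_{FL}(\alpha,q_{i,j})$ in hand for whichever adversary, I would close by applying the RDP-to-DP conversion, Lemma~\ref{lemma:rdp_dp_transition}: for every $\alpha>1$ the mechanism $\mathcal{A}_{FL}$ is $\bigl(\rho_{FL}(\alpha,q_{i,j})+\tfrac{\ln(1/\delta)}{\alpha-1},\delta\bigr)$-DP w.r.t.\ $d_{i,j}$. Since this holds simultaneously over all admissible orders, taking the infimum over $\alpha$ preserves the $\delta$-term and delivers exactly the $(\hat{\varepsilon}_{i,j}^*,\delta)$-DP guarantee of Eq.~\eqref{eq:final_epsilon}; the unimodality/quasi-convexity of $\alpha\mapsto\rho_{FL}+\ln(1/\delta)/(\alpha-1)$ (Corollary~38 of Wang \etal) guarantees the minimizer $\alpha^*$ is well defined and is recovered by scanning the finite grid of orders used in practice.

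The step I expect to be the main obstacle is justifying the two qualitatively different roles of the client sampling. For Type~II I must argue that the average is genuine data-independent post-processing so the amplified bound $\rho_{i,j}^{\tau,\lambda}$ is legitimately inherited by $\mathcal{A}_{out}$, whereas for Type~I I must argue that an adversary who observes participation enjoys no such amplification, reducing the accounting to the linear factor $\lambda T$. Keeping the adaptivity explicit so that Lemma~\ref{lemma:adaptive_comp} applies uniformly across both the $\tau$ inner steps and the $T$ outer rounds---at one and the same fixed order $\alpha$ before the final minimization---is the remaining delicate point.
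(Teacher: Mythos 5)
Your proposal is correct and follows essentially the same route as the paper: compose the per-record PoiSG bound (Lemma~\ref{lemma:rdp_amplification}) over the $\tau$ local steps via Lemma~\ref{lemma:adaptive_comp}, invoke Lemma~\ref{lemma:enhan_indiv_priv_amp} together with post-processing (Lemma~\ref{lemma:post_process}) for the aggregation round, compose over $T$ rounds (with the client-sampling amplification retained for Type~II adversaries and the linear $\lambda T$ participation accounting for the Type~I server), and finish with the RDP-to-DP conversion of Lemma~\ref{lemma:rdp_dp_transition} minimized over $\alpha$. The paper's own proof of the theorem is exactly this assembly of the previously established lemmas, so no comparison beyond this is needed.
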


\subsection{Detailed Proofs}

We first use a special case to explain the enhanced privacy effects at different stages and extend the conclusion to more general scenarios. Consider that a server collaborates with two clients, $C_1$ and $C_2$, to collectively train an FL model. Here $C_2$ is assumed to be an adversary and aims to infer whether $d_{1,1}$ is contained in $D_1$.

Considering the sequential composition of  RDP  as  in Lemma \ref{lemma:adaptive_comp}, our major objective will be analyzing the increment of the individual RDP parameter between two successive rounds, that is,
\begin{align*}
&D(\mathcal{A}_{out}(\mathcal{D}) \| \mathcal{A}_{out}(\mathcal{D}_{-1,1})) \\
&\triangleq \frac{1}{\alpha-1} \log \mathbb{E}_{o\sim \mathcal{A}_{out}(\mathcal{D}_{-1,1})} \left[\left(\frac{\Pr[\mathcal{A}_{out}(\mathcal{D}) \in o]}{\Pr[\mathcal{A}_{out}(\mathcal{D}_{-1,1}) \in o]}\right)^\alpha\right].
\end{align*}

\subsubsection{Local Multi-step Update}
According to Fact \ref{fact1} below which aligns with the decentralized nature of FL, the privacy analysis of the local multi-step update process essentially follows the existing theoretical results based on the typical DP-SGD algorithm. The only difference is that now we need to characterize the privacy cost for each record since the sampling probabilities of the records are different from each other.

\begin{fact}\label{fact1}
Once the initial model parameters $\x^{t-1}$ are fixed at the beginning of round $t\in[T]$, each client performs local update independently, i.e., the distribution of the output $\x_i^t \sim \mathcal{A}_{in}(\x^{t-1}, D_i)$ ($i=2,\dots,M$) is independent of any data records in $D_1$. 
\end{fact}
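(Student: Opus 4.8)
The plan is to establish the stated independence directly from the structural definition of $\mathcal{A}_{in}$, tracing exactly which data records and which sources of randomness each client's local computation depends on. Recall that for client $C_i$ we have $\mathcal{A}_{in}(\x^{t-1}, D_i) = \mathcal{A}^{\otimes\tau}$, an adaptive $\tau$-fold composition of PoiSG mechanisms, where each PoiSG step first draws a mini-batch from $D_i$ via the record-level Poisson sampling $\text{RSamp}_{\mathbf{q}_i}(D_i)$ and then outputs $f(\cdot) + \zeta$ with fresh Gaussian noise $\zeta \sim \mathcal{N}(0,\sigma^2\mathbb{I})$. The only data argument fed into this entire pipeline is the local dataset $D_i$; the model $\x^{t-1}$ enters merely as a fixed initialization, and all sampling indicators and noise vectors are auxiliary randomness internal to client $C_i$.

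First, I would make the randomness explicit. For client $C_i$ the computation in round $t$ is driven by the collection $\omega_i \triangleq \big(\{\beta_{i,j}^{r}\}_{j,r}, \{\zeta_i^{r}\}_{r}\big)$, consisting of the per-record Bernoulli sampling indicators (Definition \ref{def:poisson_sampling}) and the Gaussian perturbations used across the $\tau$ local steps. By construction these are drawn mutually independently, and --- crucially --- the randomness $\omega_i$ used by one client is drawn independently of the randomness $\omega_{i'}$ of every other client. Hence, once $\x^{t-1}$ is held fixed, $\x_i^t$ is a measurable function $\x_i^t = g(\x^{t-1}, D_i, \omega_i)$ whose arguments are $\x^{t-1}$, the local data $D_i$, and the client-internal randomness $\omega_i$ alone.

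Second, I would invoke the structural assumption of the cross-silo setting, stated in the introduction, that every record belongs to a single user who does not contribute records to more than one client. This yields disjoint local datasets, so $D_1 \cap D_i = \emptyset$ for every $i \neq 1$; in particular neither $D_i$ nor the client-internal randomness $\omega_i$ has any functional dependence on any record of $D_1$. Conditioned on the fixed $\x^{t-1}$, the law of $\x_i^t = g(\x^{t-1}, D_i, \omega_i)$ is therefore determined entirely by $D_i$ and the (data-independent) distribution of $\omega_i$, and is consequently independent of every record in $D_1$. Equivalently, when comparing the adjacent federated datasets $\mathcal{D}$ and $\mathcal{D}_{-1,1}$, which differ only in the record $d_{1,1} \in D_1$, the client-$i$ view is unchanged, since the restriction of both $\mathcal{D}$ and $\mathcal{D}_{-1,1}$ to client $C_i$ equals the same $D_i$.

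There is no heavy calculation here; the proof is a bookkeeping argument over dependencies. The only point requiring care --- which I would flag as the main subtlety rather than a genuine obstacle --- is the role of the conditioning on $\x^{t-1}$: the global model $\x^{t-1}$ does depend on the data of all clients through the earlier rounds, so the \emph{unconditional} local updates are not independent of $D_1$. The Fact holds precisely because it is asserted conditionally on a fixed $\x^{t-1}$, which decouples the parallel local computations within a single round; I would therefore phrase the entire argument as a statement about the conditional distribution of $\x_i^t$ given $\x^{t-1}$, and emphasize that it is this conditioning that licenses treating each client's local update as a self-contained mechanism in the subsequent composition arguments.
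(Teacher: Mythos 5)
Your proposal is correct and matches the paper's reasoning: the paper states Fact~\ref{fact1} without an explicit proof, appealing only to ``the decentralized nature of FL,'' and your dependency-bookkeeping argument --- disjoint local datasets from the single-client-per-record assumption, client-internal randomness (Poisson sampling indicators and Gaussian noise) drawn independently across clients, and the law of $\x_i^t$ conditional on a fixed $\x^{t-1}$ being a function of $(D_i,\omega_i)$ alone --- is exactly the formalization that appeal presupposes. Your flagged subtlety, that the independence is conditional on $\x^{t-1}$ and fails unconditionally since the global model carries information about all clients' data from earlier rounds, is accurate and is implicitly what licenses the paper's round-by-round adaptive composition via Lemma~\ref{lemma:adaptive_comp}.
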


For the local dataset $D_1$ with a size of $N$, let $\vect{s} = (s_1,\dots,s_N) \subseteq \{0,1\}^N$ be the indicator vector of the record sampling outcome, i.e., $s_j=1$ if $d_{1,j}$ is selected\footnote{For notational convenience, we suppress the dependence on the client identifier $i$.}. It is evident the probability that $\vect{s}$ appears is $p_\vect{s}=\prod_{j=1}^N (q_j)^{s_j}(1-q_j)^{1-s_j}$ and the total number of possible values of $\vect{s}$ is $2^N$. For example, if $N=3$ and $\vect{s} = [1,0,1]$, then $p_\vect{s} = q_1(1-q_2)q_3$ and total number of possible values of $\vect{s}$ is 8. Then for any pair of adjacent local datasets $D_1, D_{1,-1}$ and any subsets of output $o\subseteq \mathbb{O}$, the output distributions of a single DP-SGD step can be represented as:
\begin{small}\begin{align*}
\Pr[\mathcal{A}(D_1)\in o] 
&= \sum_{\vect{s}} p_\vect{s} \Pr[\mathcal{A}_G(\vect{s})\in o] \\
&= (1-q_1)\sum_{\vect{s}:s_1=0} p_\vect{s} \Pr[\mathcal{A}_G(\vect{s})\in o|s_1=0]
\\ 
&\quad + q_1\sum_{\vect{s}:s_1=1} p_\vect{s} \Pr[\mathcal{A}_G(\vect{s})\in o|s_1=1] \\
\Pr[\mathcal{A}(D_{1,-1})\in o] &= \sum_{\vect{s}:s_1=0} p_{\vect s} \Pr[\mathcal{A}_G(\vect{s})\in o|s_1=0]
\end{align*}\end{small}

As the local multi-step update process $\mathcal{A}_{in}$ can be viewed as a $\tau$-fold adaptive composition of a PoiSG mechanism, we have the following Lemma \ref{lemma:local_dpsgd} by directly applying the existing RDP composition and amplification results as shown in Lemma \ref{lemma:adaptive_comp} and Lemma \ref{lemma:rdp_amplification}. The distinction lies in the privacy guarantee provided by $\mathcal{A}_{in}$ is specific to individual records, instead of being established on the wider scope of the adjacent datasets $D_1$ and $D_1'$.

\begin{lemma}\label{lemma:local_dpsgd}
For any client $C_i$, if the sampling probability of a specific record $d_{i,j}\in D_i$ is $q_{i,j}\in(0,1]$, then the local multi-step update process $\mathcal{A}_{in}(D_i)$ satisfies $(\alpha,\rho_{i,j}^\tau)$-RDP w.r.t. $d_{i,j}$, where
\begin{equation}\label{ineq:poi_sgm_indiv}
\rho_{i,j}^\tau \triangleq\tau\cdot\rho_{PoiSG}(\alpha;D_i,D_{i,-j},q_{i,j})
\end{equation}
\end{lemma}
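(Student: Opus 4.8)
The plan is to exploit the structural decomposition already set up in the text: $\mathcal{A}_{in}$ is the $\tau$-fold adaptive composition of a single subsampled-Gaussian step $\mathcal{A} = \mathcal{A}_G\circ\text{RSamp}_{\mathbf{q}_i}$. The factor of $\tau$ in the claimed bound will come entirely from the adaptive composition lemma (Lemma \ref{lemma:adaptive_comp}), so the substantive work is to prove the \emph{per-step, per-record} statement $D_\alpha(\mathcal{A}(D_i)\|\mathcal{A}(D_{i,-j}))\le \rho_{PoiSG}(\alpha;D_i,D_{i,-j},q_{i,j})$ for one step, and then compose. The only genuine wrinkle relative to the textbook uniform analysis is that the non-target records are also Poisson-sampled, each with its own probability, so I must show this does not inflate the per-record divergence beyond the uniform bound of Lemma \ref{lemma:rdp_amplification}.

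First I would isolate the target record $d_{i,j}$ in the sampling, using the mixture decomposition recorded above. Marginalizing over the indicator vector $\vect{s}$ lets me write $\Pr[\mathcal{A}(D_i)\in o] = (1-q_{i,j})\mu_0(o) + q_{i,j}\mu_1(o)$ and $\Pr[\mathcal{A}(D_{i,-j})\in o] = \mu_0(o)$, where $\mu_0$ and $\mu_1$ are the output distributions conditioned on $d_{i,j}$ being excluded resp. included, each a mixture over the Poisson outcomes of the \emph{other} records. The key point is that only the target's probability $q_{i,j}$ appears in this two-point mixture; the remaining records' sampling probabilities are absorbed into the shared baseline components, and by Fact \ref{fact1} that sampling is identical on both sides because it is data-independent w.r.t. $d_{i,j}$.

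Next I would reduce to Lemma \ref{lemma:rdp_amplification} by conditioning. Fixing the sampling outcome $S_{\text{rest}}$ of all records other than $d_{i,j}$ fixes a base point $f(S_{\text{rest}})$, and conditioned on it the mechanism is exactly a Gaussian applied to a subset that contains $d_{i,j}$ with probability $q_{i,j}$: that is, $P_{\text{rest}} = (1-q_{i,j})\mathcal{N}(f(S_{\text{rest}}),\sigma^2) + q_{i,j}\mathcal{N}(f(S_{\text{rest}}\cup\{d_{i,j}\}),\sigma^2)$ against $Q_{\text{rest}} = \mathcal{N}(f(S_{\text{rest}}),\sigma^2)$, the classical subsampled Gaussian with rate $q_{i,j}$ and means differing by at most the $\ell_2$-sensitivity $L=1$. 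Lemma \ref{lemma:rdp_amplification} then bounds $D_\alpha(P_{\text{rest}}\|Q_{\text{rest}})\le \rho_{PoiSG}(\alpha,q_{i,j})$ for every conditional base point. To lift this to the unconditional divergence, I work with the exponentiated quantity $e^{(\alpha-1)D_\alpha(P\|Q)} = \int P^\alpha Q^{1-\alpha}$ and use that $(p,q)\mapsto p^\alpha q^{1-\alpha}$ is jointly convex for $\alpha>1$; since both $P$ and $Q$ are mixtures over $S_{\text{rest}}$ with the \emph{same} mixing weights, this yields $\int P^\alpha Q^{1-\alpha}\le \mathbb{E}_{S_{\text{rest}}}\int P_{\text{rest}}^\alpha Q_{\text{rest}}^{1-\alpha}\le e^{(\alpha-1)\rho_{PoiSG}(\alpha,q_{i,j})}$, i.e. the single-step per-record bound. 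Applying Lemma \ref{lemma:adaptive_comp} across the $\tau$ steps gives $\rho_{i,j}^\tau = \tau\cdot\rho_{PoiSG}(\alpha;D_i,D_{i,-j},q_{i,j})$, as claimed.

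The main obstacle is precisely this reduction: verifying that Poisson-sampling the non-target records (rather than holding them fixed) cannot increase the target record's divergence. Because Rényi divergence is not itself jointly convex, the argument must be carried out at the level of the jointly convex Hellinger integral $e^{(\alpha-1)D_\alpha}$ (equivalently, the $\alpha$-moment of the privacy loss), and one must check that the worst-case sensitivity hypothesis underlying Lemma \ref{lemma:rdp_amplification} holds \emph{uniformly} over all conditional base points $f(S_{\text{rest}})$ so that the per-conditional bound is the same constant and therefore survives the mixture.
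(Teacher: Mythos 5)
Your proposal is correct and follows essentially the same route as the paper: the paper likewise isolates the target record via the mixture decomposition over Poisson indicator vectors $\vect{s}$, invokes Lemma \ref{lemma:rdp_amplification} per record with the target's sampling probability $q_{i,j}$, and obtains the factor $\tau$ from the adaptive composition in Lemma \ref{lemma:adaptive_comp}. The only difference is that you explicitly justify the step the paper treats as immediate — conditioning on the non-target sampling outcomes and lifting the per-conditional bound via joint convexity of $(p,q)\mapsto p^{\alpha}q^{1-\alpha}$ — which is the same convexity device (Lemma 22 of \cite{wang2019subsampled}) the paper itself deploys later in the proof of Lemma \ref{lemma:enhan_indiv_priv_amp}.
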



\subsubsection{Global Parameter Aggregation}
We consider the output distribution of $\mathcal{A}_{out}$ on the adjacent federated datasets $\mathcal{D},\mathcal{D}_{-1,1}$ in the context of the above special case. Let $P_i \triangleq \Pr[\mathcal{A}_{in}(D_i) \in o_i]$ ($i=1,2$) and $P_1' \triangleq \Pr[\mathcal{A}_{in}(D_{1,-1}) \in o_1]$. 
It can be observed that for the federated dataset $\mathcal{D}$, the underlying distribution $\Phi\triangleq \Pr[\mathcal{A}_{out}(\mathcal{D})\in o]$ can be represented as
\begin{itemize}[leftmargin=*]
\item a mixture of $P_1$ and $P_2$, denoted as $H_{11}$, if both $C_1$ and $C_2$ are selected;
\item the same as $P_1$, denoted as $H_{10}$, if only $C_1$ is selected;
\item the same as $P_2$, denoted as $H_{01}$, if only $C_2$ is selected;
\item independent of both, denoted as $H_{00}$, if neither $C_1$ nor $C_2$ is selected.
\end{itemize}
Similarly, the distribution $\Psi\triangleq \Pr[\mathcal{A}_{out}(\mathcal{D}_{-1,1})\in o]$ will be
\begin{itemize}[leftmargin=*]
\item a mixture of $P_1'$ and $P_2$, denoted as $H_{11}'$, if both $C_1$ and $C_2$ are selected;
\item the same as $P_1'$, denoted as $H_{10}'$, if only $C_1$ is selected;
\item the same as $H_{01}$, if only $C_2$ is selected;
\item the same as $H_{00}$, if neither $C_1$ nor $C_2$ is selected.
\end{itemize}
\begin{remark}
Note that here we focus on the individual privacy cost for PDP which is measured on all pairs of record-level adjacent datasets w.r.t. the ``target'' record $d_{1,1}$. In the context of traditional (uniform) DP, we cannot simply assume the output from other clients $o_2,\dots,o_M$ are constants when analyzing the impact of an individual record on the worst-case privacy cost, as the output distribution of $\text{Avg}(\cdot)$ is highly dependent on each record in $\mathcal{D}$.
\end{remark}

Let $\omega \subseteq (\omega_1,\dots,\omega_M) \in \{0,1\}^M$ be the indicator vector of the outcome of client sampling and $\omega_i=1$ denotes that Client $C_i$ is selected. Then we have
\begin{small}\begin{align*}
\Phi &= \lambda(1-\lambda)H_{10} + \lambda^2 H_{11}  + (1-\lambda)^2 H_{00} + (1-\lambda)\lambda H_{01}, \\
&= \sum_{\omega:\omega_1=1} p_{\omega}H_\omega + \sum_{\omega:\omega_1=0}p_{\omega}H_\omega.\\
\Psi &= \lambda(1-\lambda) H_{10}' + \lambda^2 H_{11}' + (1-\lambda)^2 H_{00} + (1-\lambda)\lambda H_{01}.\\
&= \sum_{\omega:\omega_1=1} p_{\omega}H_\omega' + \sum_{\omega:\omega_1=0}p_{\omega}H_\omega.
\end{align*}\end{small}

Now we try to bound $\mathbb{E}_{\Psi} \left[ ( \Phi/\Psi )^\alpha\right]$ by means of decomposition and simplification. More specifically, we have
\begin{footnotesize}\begin{align*}
&\mathbb{E}_{\Psi} \left[ ( \Phi/\Psi )^\alpha\right] \overset{(1)}{\leq} \mathbb{E}_{\omega} \mathbb{E}_{\Psi} \left[ ( \Phi/\Psi )^\alpha | \omega\right] \\
&= \mathbb{E}_{\omega_2} \left\{ \lambda\mathbb{E}_{H_\omega'} \left[ ( H_\omega/H_\omega')^\alpha | \omega_1=1\right] + (1-\lambda)\mathbb{E}_{H_\omega'} \left[( H_\omega/H_\omega')^\alpha | \omega_1=0\right]\right\} \\
&= \lambda\mathbb{E}_{H_\omega'} \left[ ( H_\omega/H_\omega')^\alpha | \omega_1=1\right] + (1-\lambda) \\
&\overset{(2)}{\leq} \lambda \mathbb{E}_{P_1'} \left[ ( P_1/P_1')^\alpha \right] + (1-\lambda) \\
&= \lambda e^{(\alpha-1)D_{\alpha}(\mathcal{A}_{in}(D_1\|\mathcal{A}_{in}(D_{1,-1}))} + (1-\lambda) \\
&\leq \lambda e^{(\alpha-1)\rho_{1,1}^\tau} + (1-\lambda),
\end{align*}\end{footnotesize}
where the inequality (1) follows from the Jensen’s inequality and Lemma 22 in \cite{wang2019subsampled} which proves bivariate function $f(x,y) = x^\alpha / y^{\alpha-1}$ is jointly convex on $\mathbb{R}_+^2$ for all $\alpha>1$; the inequality (2) follows from Lemma \ref{lemma:post_process} which implies that the individual privacy guarantee for $d_{1,1}$ is immune to post-processing, i.e., 
\begin{small}\begin{align*}
D_\alpha(H_\omega \| H_\omega')\leq D_\alpha(\mathcal{A}_{in}(D_1) \| \mathcal{A}_{in}(D_{1,-1})) \leq \rho_{i,j}^\tau.
\end{align*}\end{small}
The equality holds if each client's sampling probability $\lambda = 1$. Similarly, we can also have
\begin{small}\begin{align*}
\mathbb{E}_{\Phi} \left[ ( \Psi/\Phi )^\alpha\right] 
&\leq \lambda e^{(\alpha-1)D_{\alpha}(\mathcal{A}_{in}(D_{1,-1})\|\mathcal{A}_{in}(D_1))} + (1-\lambda) \\
&\leq \lambda e^{(\alpha-1)\rho_{1,1}^\tau} + (1-\lambda).
\end{align*}\end{small}

Now we have completed the proof of Lemma \ref{lemma:enhan_indiv_priv_amp}. 


\subsubsection{Global Multi-round Update}
Putting all the pieces together, the proof of Theorem \ref{theorem:indiv_priv_analysis} can be further derived by leveraging the adaptive composition theorem of RDP as shown in Lemma \ref{lemma:adaptive_comp}. 

\section{Selecting Sampling Probability}\label{sec:solution}

\begin{figure}[t]\centering
\includegraphics[width=\linewidth]{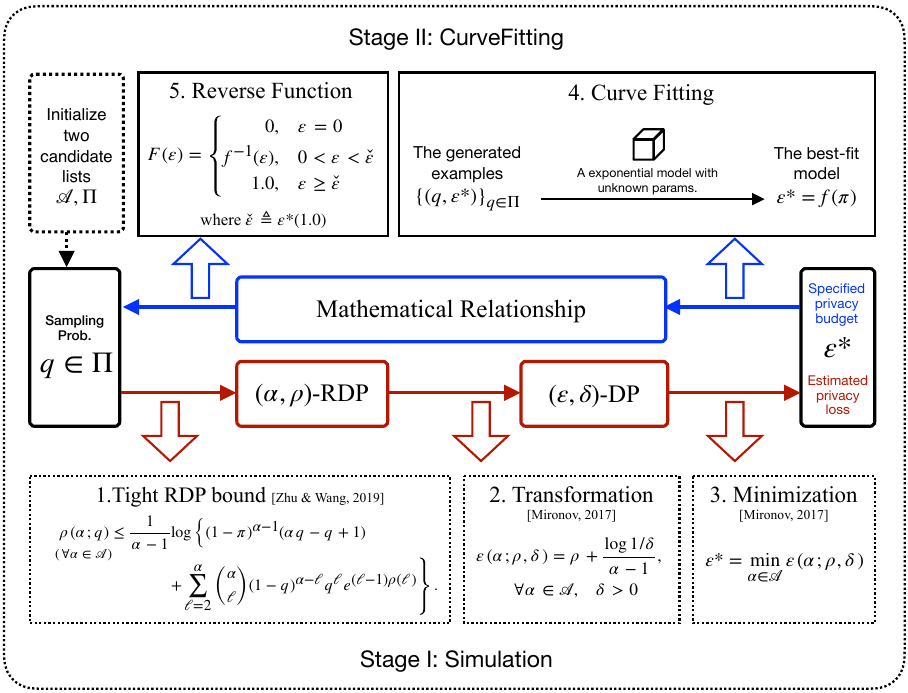}
\caption{An illustration of Simulation-CurveFitting.}
\Description{An illustration of Simulation-CurveFitting.}
\label{fig:scf_framework}
\end{figure}

In this section, we explore how to select a sampling probability for every single record to achieve an estimated privacy cost that closely aligns with the desired privacy budget, on the condition that the other factors (e.g., $T, \tau$, $\sigma$, and $\delta$) remain constant.

\subsection{Simulation-CurveFitting} 
Given the theoretical result in Theorem \ref{theorem:indiv_priv_analysis}, it would be ideal if we could directly derive a sampling probability for each record given its predetermined personalized privacy budget. Yet, it is non-trivial to derive an explicit closed-form expression due to the complexity arising from the optimization process and the highly nonlinear nature of the tight bound $\rho_{FL}$. Existing approaches utilize numerical methods to handle this absence of closed-form issue and approximately obtain the sampling probability, e.g., by binary search algorithm \cite{boenisch2023have}. However, these strategies become computationally demanding when applied to our case. We introduce a new and effective strategy termed \textit{Simulation-CurveFitting} (SCF). As the name suggests, this approach consists of the following two stages, and the specific steps are illustrated in Figure \ref{fig:scf_framework} and outlined in Algorithm \ref{alg:simu_curvefit_strategy}.

\textbf{Stage I: Simulation.} 
We aim to elucidate the relationship between $q$ and $\varepsilon$\footnote{For notational convenience, we suppress the dependence on the record identifier $i,j$.} through a series of simulation experiments. In the beginning, we establish two finite sets of candidate values: the first consists of various sampling probabilities, which we refer to as $\Pi$; the second consists of a sequence of discrete RDP orders denoted as $\mathbb{A}$. For each $q\in \Pi$, we compute the DP budget curves and then find the corresponding minimum value $\varepsilon^*$. For illustration, we show a series of DP budget curves for the rPDP-FL algorithm in Figure \ref{fig:inclusion_prob_est} (left\footnote{This figure closely resembles Figure \ref{fig:rdp_dp_curves_sgd} (right) which depicts results obtained in centralized settings. However, a key distinction is that the minimum values $\varepsilon^*$ across all the curves in Figure \ref{fig:rdp_dp_curves_sgd} (right) are consistently greater than those in Figure \ref{fig:inclusion_prob_est} (left). This highlights the enhanced effect of privacy amplification resulting from the client-level sampling procedure in FL framework.}), corresponding to varying values of $q$. The minimum value on each curve is then used to plot the optimum DP budget w.r.t. the corresponding sampling probability on the right figure.
The pseudocode is illustrated in lines 3-6 of Algorithm \ref{alg:simu_curvefit_strategy}.

\textbf{Stage II: CurveFitting.}
In Figure \ref{fig:inclusion_prob_est} (right), we depict the one-to-one correspondence between the optimum DP budgets and their respective sampling probabilities (represented as a series of dots) across various parameter settings (represented by different colors).
This visualization reveals a compelling observation -- \textit{a mathematical function may potentially model the tight upper bound on the accumulative privacy cost w.r.t. its sampling probability}. Armed with this insight, we employ curve-fitting tools\footnote{SciPy: https://scipy.org/.} to approximate the correlation between $q$ and $\varepsilon$. The best-fit solution obtained  is a simple exponential function in the form
$$\hat{\varepsilon}^* \approx f(q) \triangleq e^{a\cdot q + b} + c,$$
as stated in Line 7 of Algorithm \ref{alg:simu_curvefit_strategy}. Our best-fit function is more concise and elegant than the one given in Eq.(\ref{eq:final_epsilon}) in Theorem \ref{theorem:indiv_priv_analysis}, allowing inverse computation of the sampling probability $q$ given the privacy cost. We refer to the inverse function as \textit{sampling probability estimator}, denoted as $F(\varepsilon)$, which takes a privacy budget $\varepsilon>0$ as input and outputs a valid sampling probability $q\in [0,1]$. In particular, if an input $\varepsilon$ exceeds the optimum DP budget corresponding to $q$=1.0, denoted as $\varepsilon^*(1.0)$, the output probability $q$ is projected to be 1.0. 
See Line 8 of Algorithm \ref{alg:simu_curvefit_strategy} for more details.

\textbf{Measures for goodness-of-fit.} We utilize the $R^2$ value (also known as the coefficient of determination) to quantify how well the estimated privacy cost by the curve fitting function matches the privacy cost derived from the privacy accounting, ranging from 0 (no correlation) to 1 (perfect positive correlation).
As per the empirical results illustrated in Figure \ref{fig:inclusion_prob_est} (right), the best-fit model exhibits an $R^2$ value exceeding 99\%. This demonstrates strong evidence of the model's ability to derive the sampling probability for each record based on their desired privacy budgets. 

\begin{algorithm}[t] \small
\caption{The Simulation-CurveFitting (SCF) strategy}
\label{alg:simu_curvefit_strategy}
\DontPrintSemicolon
\SetKwInOut{Input}{input}
\SetKwInOut{Parameter}{parameter}
\SetKwInOut{Output}{output}
\Input{
The noise multiplier $\sigma$, the gradient clipping bound $L$, and the target DP parameter $\delta$.} 
\Output{The sampling probability estimator}
\color{blue}\tcp{Initialize two candidate lists of $\alpha,q$} \color{black}
$\mathbb{A}\leftarrow$ a candidate list of RDP order $\alpha\in(1,\infty)$ \;
$\Pi\leftarrow$ a candidate list of sampling probability $q\in[0,1]$\;
\ForEach{$q \in \Pi$}{
    \color{blue}\tcp{Numerical simulation analysis of PoiSGM with sampling probability $q$} \color{black}
    $\rho_{FL}(\alpha,q) \leftarrow$ (the RDP budget curve w.r.t. order $\alpha\in \mathbb{A}$ calculated based on Theorem \ref{theorem:indiv_priv_analysis})\;
    $\varepsilon(\alpha,\delta,q) = \rho_{FL}(\alpha,q) + \frac{\log{1/\delta}}{\alpha-1} \leftarrow$ (the DP budget curve w.r.t. order $\alpha\in \mathbb{A}$ calculated based on Lemma \ref{lemma:rdp_dp_transition}) \;
    $\varepsilon^*(\delta,q) = \min_{\alpha\in\mathbb{A}} \varepsilon(\alpha,\delta,q) \leftarrow$ (the optimum DP budget w.r.t. sampling probability $q$)\;
}
\color{blue}\tcp{Curve fitting}\color{black}
$f(q) \leftarrow$ (the best-fit mathematical model to the generated observations $\{(q, \varepsilon^*)\}_{q\in\Pi}$) \;
\color{blue}\tcp{The sampling probability estimator }\color{black}
\begin{equation*}\label{eq:inclusion_probability}
F = \left\{
\begin{aligned}
    f^{-1}(\varepsilon), \quad & 0 < \varepsilon < \varepsilon^*(1.0) \\
    1.0, \quad & \varepsilon \geq \varepsilon^*(1.0) \\
\end{aligned}
\right.
\end{equation*}
\Return $F$
\end{algorithm}

\begin{figure}[t]\centering
\includegraphics[width=\linewidth]{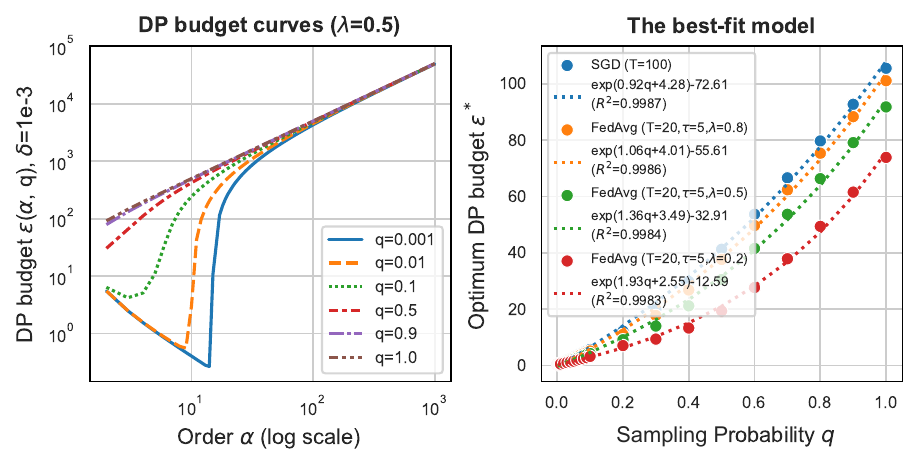}
\caption{The DP budget curves w.r.t. order $\alpha$ (left) and the optimum DP budget w.r.t. sampling probability $q$ (right) of a rPDP-FL algorithm with parameters $T$=20, $\tau$=5, $\lambda$=0.5, $\sigma$=1.0, and $\delta$=1e-3.}
\Description{The DP budget curves w.r.t. order $\alpha$ (left) and the optimum DP budget w.r.t. sampling probability $q$ (right) of a rPDP-FL algorithm with parameters $T$=20, $\tau$=5, $\sigma$=1.0, and $\delta$=1e-3.}
\label{fig:inclusion_prob_est}
\end{figure}

\subsection{Complete Algorithm of rPDP-FL}\label{subsec:compute_rpdpfl}
The SCF strategy has been further integrated into Algorithm \ref{alg:pers_dp_fedavg_pesudo}, with some tweaks in the initialization and sampling procedures. A comprehensive outline of rPDP-FL is provided in Algorithm \ref{alg:pers_dp_fedavg}.

\textbf{The per-record sampling probability initialization.} At the onset of the learning process, the server will compute the sampling probability estimator $F$ and distribute it to all clients. Note that this computation does not rely on personal data from sensitive records stored locally, so there is no risk of compromising the privacy of these records.
On the client side, the per-record sampling probabilities will be calculated by directly plugging in their privacy budgets $\varepsilon_{i,j}$ into the received sampling probability estimator. All clients then employ non-uniform Poisson sampling to randomly select a subset of records based on these probabilities, and apply the DP-SGD algorithm \cite{abadi2016deep} for local model updates.

\textbf{The per-record privacy budget accountant.} 
Another important task for completing rPDP-FL is to keep track of the usage of the privacy budget for each of the records in the course of training. Once the privacy budget runs out, individuals can opt out of the remaining training. In our work, we introduce a monitoring module, called the \textit{budget accountant}, which is in charge of privacy budget accounting: 
(1) \textit{Pre-check} at the beginning of the communication round if an individual has sufficient privacy budget to participate in the current round;
(2) \textit{Compute and update} the accumulated privacy cost of an individual after the current communication round is over. 

\subsection{Discussions} \label{subsec:discussion}

\begin{algorithm}[t] \small
\caption{Record-level Personalized Differentially Private Federated Learning (rPDP-FL, complete version)}
\label{alg:pers_dp_fedavg}
\DontPrintSemicolon
\SetKwInOut{Input}{input}
\SetKwInOut{Parameter}{parameter}
\SetKwInOut{Output}{output}
\Input{$M$ clients with their local datasets $D_{i\in [M]}$ and pre-specified privacy budgets $\{\varepsilon_{i,j}\}_{i\in |D_i|,j\in [M]}$; the total communication rounds $T$, the client-level sampling probability $\lambda$. Parameters shared by all clients:
the local training steps $\tau$; the learning rate $\eta$; the gradient clipping bound $L$, the noise multiplier $\sigma$ and the target DP parameter $\delta$.}
\tcp{Initialization}
$\mathcal{C}\leftarrow$ (all participating clients with size $M$) \;
\color{blue}\tcp{Pre-computation}\color{black}
$F \leftarrow$ (the sampling probability estimator obtained through Alg. \ref{alg:simu_curvefit_strategy}) \;
\ForEach(\emph{in parallel}){$C_i \in \mathcal{C}$ }{
    $\{q_{i,j} = F(\varepsilon_{i,j})\}_{j\in[|D_i|]}\leftarrow$ (the per-record sampling probabilities) \;
}
$\x^0\leftarrow$ (Initialize randomly) \;
\For{$t\in[T]$ }{
    \color{blue}\tcp{Client-level Poisson sampling with the uniform sampling probability $\lambda$}\color{black}
    $\check{\mathcal{C}}^t\leftarrow$ (a random subset drawn from $M$ clients) \;
    \ForEach(\emph{in parallel}){$C_i \in \check{\mathcal{C}}^t$ }{
        $\x_i^{t,0} = \x^t$\;
        
        \For{$r\in[\tau]$ }{
            \color{blue}\tcp{Record-level Poisson sampling  with the derived sampling probability $\{q_{i,j}\}_{j\in [|D_i|]}$}\color{black}
            $S^r\leftarrow$ (a random mini-batch drawn from $D_i$) \;
            \ForEach{microbatch $\xi \in S^r$}{
                $\bar{\g}^r_\xi\leftarrow \grad\ell (\x_i^{t,r}; \xi) \cdot \min(1,\frac{L}{\|\grad\ell (\x_i^{t,r}; \xi)\|_2})$ \;
            }
            $\tilde{\g}^r\leftarrow \frac{\eta}{|S^r|}\left(\sum\bar{\g}^r_\xi+\mathcal{N}(0,\sigma^2  L^2)\right)$ \\
            $\x_i^{t,r+1}\leftarrow\x_i^{t,r}-\eta\tilde{\g}^r$ \;
        }
        $\Delta\x_i = \x_i^{t,\tau} - \x_i^{t,0}$ \;
    }   
    $\x^{t+1}\leftarrow$ (taking the average of all the local updates)\;
}
\end{algorithm}

\noindent\textbf{Generalization of the SCF strategy.} 
A distinct characteristic of the SCF strategy lies in its independence from the inherent complexities of specific processes. While this paper primarily focuses on FL applications, the SCF approach serves as a versatile plug-in module applicable to a broad spectrum of tasks that incorporate data sampling and Gaussian mechanism, including private statistical analysis and other SGD-based optimization tasks.

\noindent\textbf{Sampling/Noise trade-offs.} In highly privacy-sensitive scenarios, the majority of individuals (e.g., patients) prefer stronger privacy protections, and thus their sensitive personal records are less likely to be included in analysis. We argue that this issue is not about our methodology itself, but an inevitable consequence of the personal privacy decision. 
One remedy is to adjust the parameters of the Gaussian mechanism, for example, setting a larger noise multiplier $\sigma$ such that a higher level of Gaussian noise is used in the computation, resulting in universally increased sampling probabilities for everyone. In Figure \ref{fig:inclusion_estimator_sigma}, we illustrate the relationship between optimum DP budgets and their corresponding $q$ (represented as a series of dots) across various $\sigma$ (represented by different colors). Essentially, there is a trade-off between sampling probability and perturbation noise. However, this approach should be treated with extreme caution, as an improper $\sigma$ could lead to a significant degradation in model performance.

\begin{figure}\centering
\includegraphics[width=0.45\linewidth]{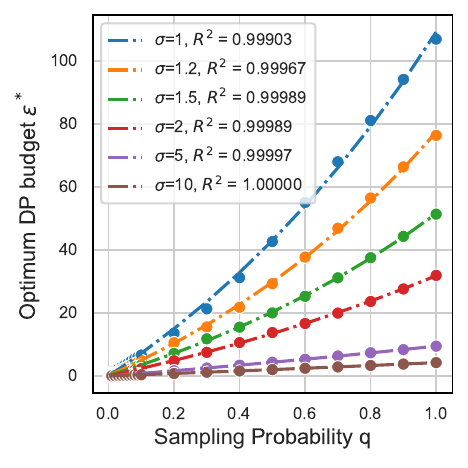}
\caption{The optimum DP budget w.r.t. sampling probability $q$ and noise multiplier $\sigma$ of a rPDP-FL algorithm with parameters $T$=20, $\tau$=5, and $\delta$=1e-5.}
\Description{The optimum DP budget w.r.t. sampling probability $q$ and noise multiplier $\sigma$ of a rPDP-FL algorithm with parameters $T$=20, $\tau$=5, and $\delta$=1e-5.}
\label{fig:inclusion_estimator_sigma}
\end{figure}

\noindent\textbf{The utility risks of the potentially suboptimal $F$.} As a result of the intrinsic traits of the numerical simulation method, the obtained sampling probability estimator $F$ could be suboptimal and lead to utility risks for certain records, e.g., slightly unused budget (due to low sampling probability) or early stop (due to high sampling probability). However, both cases have statistically insignificant effects since the best-fit curve achieves more than 99.9\% $R^2$ value as shown in Figure \ref{fig:inclusion_estimator_sigma}.

\section{Experimental Evaluation} \label{sec:evaluation}

\begin{table*}[t]\scriptsize\centering 
\caption{Overview of the datasets and baseline models used in our experiments.}
\label{tab:datasets}
\begin{tabular}{c|c|c|c|c|c|c|c|c|c|c}
\midrule\midrule
Dataset & Non-/IID & \makecell{\# clients} & \# features & \makecell{\# labels} & \makecell{\# examples\\(per client)} & Train/Test split & Model & \makecell{\# params\\(trainable)} & \makecell{Training steps\\(local $\tau$ / global $T$)} & \makecell{Client-level \\samp. prob. $\lambda$} \\
\hline
\makecell{Heart-Disease \cite{terrail2022flamby}} & 
Non-IID & 4 & \makecell{13} & 2 & 303 / 261 / 46 / 130 & 66\% / 34\% & \makecell{Logistic Regression\\(training from scratch)} & \makecell{20} & \makecell{10 / 15} & 1.0 \\
\hline
\makecell{MNIST \cite{lecun1998gradient}} &
Non-/IID & 10 & 28$\times$28$\times$1  & 10 & $\approx$ 6,000 & 66\% / 34\% 
& \makecell{Two-Layer CNN\\(training from scratch)} & \makecell{26,010} & \makecell{50 / 15} & 0.5 \\
\hline
\makecell{CIFAR10 \cite{krizhevsky2009learning}} &
IID & 10 & 32$\times$32$\times$3 & 10 & 5,000 & 66\% / 34\% & \makecell{ResNet-18 \cite{he2016deep}\\(training from scratch)} & \makecell{11,181,642} & \makecell{50 / 30} & 0.5 \\
\hline
\makecell{SNLI \cite{bowman-etal-2015-large}} &
IID & 10 & \makecell{Premise-\\hypothesis\\pairs} & 3 & 54,936 & 95\% / 5\% &  \makecell{Pretrained BERT \cite{kenton2019bert}\\(fine-tuning)} & \makecell{7,680,771} & \makecell{50 / 15} & 0.5 \\
\midrule\midrule
\end{tabular}
\end{table*}

In this section, we conduct a thorough empirical analysis to evaluate the performance of both the SCF strategy and the rPDP-FL algorithm. 
In Subsection \ref{subsec:existing_vs_ours}, we focus on evaluating the effectiveness of SCF by comparing it with the other two existing strategies employed for achieving PDP in centralized ML settings. Given the absence of alternative implementations achieving record-level PDP in the context of FL, we assess the utility improvement of rPDP-FL by contrasting it with two conventional methods that do not incorporate personalized privacy preservation in Subsection \ref{subsec:uniform_vs_ours}. The source code, data, and other artifacts have been made available\footnote{https://github.com/Emory-AIMS/rPDP-FL.git}.

\noindent
\textbf{Privacy Preference Distributions.} We simulate different scenarios where users have diverse privacy preferences for their data. 
\begin{itemize}[leftmargin=*]
\item \textbf{ThreeLevels}: each record has the option to select a preferred privacy budget from three distinct choices (categories) $\{\varepsilon_1,\varepsilon_2,\varepsilon_3\}$ where $\varepsilon_1  < \varepsilon_2 < \varepsilon_3$, each denoting strong, moderate, and weak privacy protection, respectively. This simulates practical PDP implementation scenarios where the users can choose from a few predefined privacy categories. 
\item \textbf{BoundedPareto}: each record has an arbitrary $\varepsilon_{i,j} \in [0.1, 10]$ that approximately follows a Pareto distribution. This simulates the scenario where a majority of individuals lean towards stringent privacy safeguards, while a smaller subset opts for less restrictive protections in exchange for improved services or other incentives.
\item \textbf{BoundedMixGauss}: each record has an arbitrary $\varepsilon_{i,j} \in [0.1, 10]$ that approximately follows a mixture of three Gaussian distributions  with means $\{\varepsilon_1,\varepsilon_2,\varepsilon_3\}$ where $\varepsilon_1  < \varepsilon_2 < \varepsilon_3$.  This simulates the scenarios where the privacy choices are multi-modal as in many other complex social systems \cite{WuPNAS18803, scalas2005intertrade,scalas2004anomalous}. 
\end{itemize}

\noindent
\textbf{Datasets and Models.} We consider four classification tasks with the consistent objective of training a global model privately on the federated \textit{Heart Disease} \cite{terrail2022flamby}, \textit{MNIST} \cite{lecun1998gradient}, \textit{CIFAR10} \cite{krizhevsky2009learning}, and \textit{SNLI} \cite{bowman-etal-2015-large}, separately. Note that \textit{Heart-Disease} is a real healthcare dataset comprising records from 920 patients across four hospitals in Cleveland, Hungary, Switzerland, and Long Beach V. 
On the other hand, \textit{MNIST} and \textit{CIFAR10} are two commonly used benchmarks for image classification tasks, while the \textit{SNLI} dataset is a benchmark for natural language inference (NLP) tasks. In these cases, we apply the IID and non-IID partitioning strategies introduced in \cite{mcmahan2017communication} to split total training examples into $M=10$ subsets. For a more comprehensive overview of the datasets, along with details on the corresponding baseline models, please refer to Table \ref{tab:datasets}. 

\noindent\textbf{Implementations.} Our implementation utilizes the Opacus library. All experiments are conducted on a machine with one NVIDIA A40 GPU running on Ubuntu with 256 GB memory. Given that the model training is a randomized process, we repeat all the experiments five times and report the mean test accuracy across all clients.

\subsection{Comparison of SCF with Existing Strategies}\label{subsec:existing_vs_ours}

In this section, we show the effectiveness and efficiency of our SCF strategy in terms of model utility and computational cost by comparing it with the following representative approaches:
\begin{itemize}[leftmargin=*]
\item \textbf{Filter}: also known as R\'enyi privacy filter (Algorithm 3 in \cite{feldman2021individual}), is an individual privacy accounting method that monitors the accumulation of \textit{squared gradient norms} $B_{norm}$ for each record during the training process. The record will be filtered out if this accumulation exceeds a pre-specified threshold.
\item \textbf{BinarySearch}: also known as Individual DP-SGD (IDP-SGD) with the Sample mechanism (Algorithm 2 in \cite{boenisch2023have}), is a \textit{binary search}-based approach aiming for finding the optimal sampling probability within the range of $[0,1]$ for a target privacy budget.
\end{itemize}

Given that both approaches were initially tailored for \textit{centralized} ML scenarios, our experiments adhere to this context to maintain fairness in comparisons. Specifically, we implement a variant of the DP-SGD \cite{abadi2016deep} algorithm, incorporating refinements in the predetermination of record-level sampling probabilities through the SCF strategy. Note that here the RDP budget curve (line 4 in Alg. \ref{alg:simu_curvefit_strategy}) should be calculated based on Lemma \ref{lemma:rdp_amplification} instead of Theorem \ref{theorem:indiv_priv_analysis}.

\subsubsection{Comparison of SCF with Filter}\label{subsubsec:scf_vs_filter} While both \textit{SCF} and \textit{Filter} share a common objective of achieving personalized privacy protection, they significantly differ in the definition of ``budgets'': \textit{Filter} considers a budget for the accumulative squared gradient norms for the records in the training process, while we focus on the DP privacy budget $\varepsilon$ for the accumulative privacy cost. Achieving a smooth transition between the two quantities is challenging\footnote{We highlight that the final privacy budget $\varepsilon$ depends on the sampling probability $q$, noise multiplier $\sigma$, and the number of training steps $\tau$. However, in \textit{Filter}, different records could have varying values of $\tau$ even though they share the same ``squared norm budget'' $B_{norm}$. As a result, adjusting the hyperparameters of $B_{norm}$ to align with a specific $\varepsilon$ for each record, and vice versa, becomes challenging, particularly in general personalized privacy scenarios like BoundedMixGauss or BoundedPareto.}.
Therefore, our evaluation only focuses on a simplified personalized privacy scenario \textit{ThreeLevels}, where the percentage of records with $\varepsilon_1$, $\varepsilon_2$, and $\varepsilon_3$ is 70\%, 20\%, and 10\%, respectively. We incorporate \textit{Filter} into the private gradient descent algorithm\footnote{It computes noisy gradients using the entire training dataset in each iteration, which is different from DP-SGD.} as implemented in \cite{feldman2021individual}. Then we assess the model utility and computational efficiency of both approaches on the pooled Heart-Disease and MNIST datasets. 

\begin{table}[tbp]\scriptsize\centering 
\caption{Comparison of SCF with Filter in centralized ML.}
\label{tab:filter_vs_scf}
\begin{tabular}{c|c|c|c|c|c}
\midrule\midrule
\makecell{Priv. Pref. Dist.} & Dataset & Method & Batch Size & Test Acc & \makecell{Runtime\\(s/step)} \\
\hline
\multirow{4}{*}{\makecell{ThreeLevels:\\$\varepsilon_1\approx$ 2.0 (70\%)\\$\varepsilon_2\approx$ 4.7 (20\%)\\$\varepsilon_3\approx$ 11.8 (10\%)}} & 
\multirow{2}{*}{\makecell{Heart-Disease\\(pooled)}} & 
Filter & 486 & 0.7717 & 0.0128 \\
\cline{3-6}
& & SCF & 32 (Expected) & \textbf{0.8189} & \textbf{0.0050} \\
\cline{2-6}
& 
\multirow{2}{*}{\makecell{MNIST\\(pooled)}} & Filter & 60000 & 0.8411 & 7.1016 \\
\cline{3-6}
& & SCF & 64 (Expected) & \textbf{0.9477} & \textbf{2.1824} \\
\midrule\midrule
\end{tabular}
\end{table}

\begin{figure}[tbp]\centering
\includegraphics[width=\linewidth]{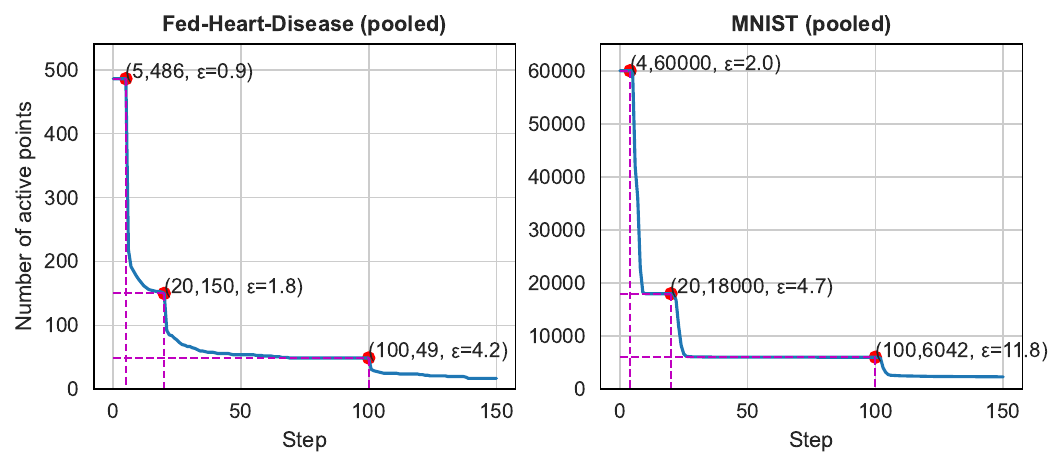}
\caption{Number of active records during one run of private GD with Filter evaluated on the pooled Fed-Heart-Disease dataset (left) and MNIST (right). The red points indicate the specific steps that records with different privacy budgets start to get filtered.} 
\Description{Number of active records during one run of private GD with Filter evaluated on the pooled Fed-Heart-Disease dataset (left) and MNIST (right). The red points indicate the specific steps that records with different privacy budgets start to get filtered.} 
\label{fig:filter_active_points}
\end{figure}

\begin{table}[tbp]\scriptsize\centering 
\caption{Comparison of SCF with BinarySearch in centralized ML.}
\label{tab:binary_vs_scf}
\begin{tabular}{c|c|c|c}
\midrule\midrule
\makecell{Priv. Pref. Dist.} & Method & Test Acc & \makecell{Runtime (s)} \\
\hline
\multirow{2}{*}{\makecell{Group-3: 3 unevenly sized privacy groups\\(54\%-37\%-9\%) with privacy budgets [1., 2., 3.]}} &  
BinarySearch & 0.7274 & \textbf{1.69} \\
\cline{2-4}
& SCF & 0.7240 & 13.11 \\
\hline
\multirow{2}{*}{\makecell{Group-100: 100 evenly sized privacy groups\\with privacy budgets [1., 1.05, ..., 5.95]}} & BinarySearch & 0.8135 & 52.50 \\
\cline{2-4} 
& SCF & 0.8134 & \textbf{13.32} \\
\hline
\multirow{2}{*}{\makecell{Individual-1000: per-record privacy budgets\\drawn from BoundedMixGauss}} & BinarySearch & 0.6858  & 597.82 \\
\cline{2-4} 
& SCF & 0.6861 & \textbf{14.10} \\
\midrule\midrule
\end{tabular}
\end{table}

\textbf{Number of active records.} In Figure \ref{fig:filter_active_points}, we plot the number of active records during one run of private GD with \textit{Filter}, i.e. those that have not yet exhausted their privacy budgets. The results reveal that in scenarios where personalized privacy protection is implemented, particularly when a significant portion of the records adheres to conservative privacy preferences, a considerable number of records undergo early filtration in the training process, which restricts the model's capacity to learn from the dataset effectively. As shown in Table \ref{tab:filter_vs_scf}, our experimental results demonstrate that our \textit{SCF} outperforms \textit{Filter} in terms of test accuracy.

\textbf{Computational efficiency and privacy amplification.}
Due to the requisite computation of per-example gradient norms at each iteration, \textit{Filter} no longer retains the benefits of SGD for improving speed and memory efficiency \cite{feldman2021individual}. 
In contrast, the Poisson sampling procedure involved in our approach could yield mini-batches of size $\sum_{j\in |D|}{q_j} < |D|$ in expectation. We report the average time cost for each training step in Table \ref{tab:filter_vs_scf}, revealing SCF achieves a 2x or 3x speedup compared to Filter in our assessments.
Moreover, the absence of the random data subsampling procedure in private GD with Filter also leads to the loss of privacy amplification effect for individual records. 

\subsubsection{Comparison of SCF with BinarySearch}
As discussed in Section \ref{sec:related_work}, BinarySearch faces a significant limitation in terms of efficiency when dealing with records' privacy budgets that cover values within a continuous range, rather than a discrete subset. To illustrate this, we carry out a small-scale experiment where a private model is trained on a subset of 1,000 examples randomly selected from the MNIST dataset using DP-SGD. Three different personalized privacy scenarios are being investigated to evaluate the efficiency and utility of BinarySearch and SCF, as outlined in Table \ref{tab:binary_vs_scf}. Note that both Group-3 and Group-100 are privacy setups concerned in \cite{boenisch2023have}, where all records are split into limited privacy groups and those within one privacy group share the same privacy budget. The Individual-1000 represents the privacy scenario considered in this paper, where the privacy budgets are assigned on an individual basis. 
We report the test accuracies and the runtimes to obtain the per-record sampling probabilities (averaged over 5 trials). Our experiment results demonstrate that SCF significantly enhances efficiency compared to BinarySeach in the more general scenarios while achieving a comparable model performance. 
Here we did not evaluate BinarySearch in a parallel manner as this implementation was not discussed in \cite{boenisch2023have} and is not the focus of this paper.

\begin{remark}
Boenisch et al. \cite{boenisch2023have} also proposed a Scale mechanism that aims at scaling the noise added to each gradient by setting individualized clipping bounds. The optimal clipping bound is still found in a binary-search style. According to Table 16-17 in \cite{boenisch2023have}, Scale leads to a comparable overall performance with Sample (regarding runtime and test accuracy). Thus, both Sample and Scale share similar limitations against our method.
\end{remark}

\begin{figure*}[ht]\centering
\includegraphics[width=\linewidth]{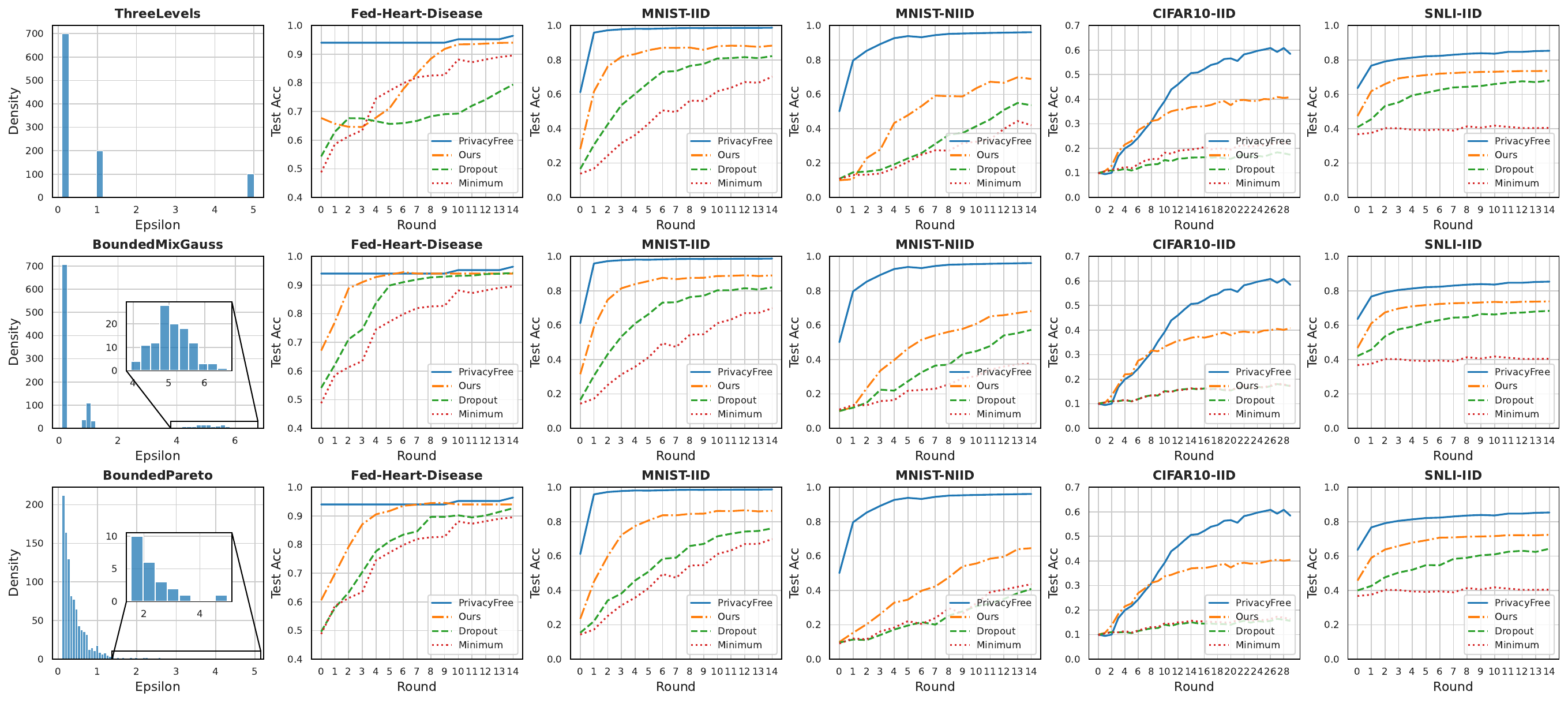}
\caption{Evaluation of rPDP-FL (labeled as \textit{Ours}) across diverse privacy preference distributions and datasets in federated learning.} 
\Description{ Evaluation of rPDP-FL (labeled as \textit{Ours}) across diverse privacy preference distributions and datasets in federated learning.} 
\label{fig:all_plots}
\end{figure*}

\subsection{Utility Improvement through Privacy Personalization}\label{subsec:uniform_vs_ours}
In this section, we evaluate the effectiveness of our rPDP-FL algorithm in providing record-level personalized DP while optimizing for model accuracy. Given the absence of alternative implementations achieving record-level PDP in the FL setting, we  established three DP-FedAvg-based baseline methods as follows:
\begin{itemize}[leftmargin=*]
\item \textbf{Minimum}: DP-FedAvg ensuring the most stringent \textit{uniform} ($\varepsilon_{min}, \delta$)-DP guarantees, where $\varepsilon_{min} = \min_{i\in [M], j\in [|D_i|]} \varepsilon_{i,j}$. 
\item \textbf{Dropout}: DP-FedAvg providing moderate \textit{uniform} ($\varepsilon_{mod},\delta$)-DP guarantees for a subset of individuals whose privacy budgets $\varepsilon_{i,j}$ are larger than a predefined threshold $\varepsilon_{mod}$ and dropping out those with privacy budgets below the threshold. Here, $\varepsilon_{mod}$ is determined as the empirical sample mean of $\frac{1}{|\mathcal{D}|} \sum_{i,j} \varepsilon_{i,j}$.
\item \textbf{PrivacyFree}: The vanilla FedAvg \cite{mcmahan2017communication} without DP guarantees, which serves as a benchmark for assessing the reduction in global model utility.
\end{itemize}
Note that both \textit{Minimum} and \textit{Dropout} ensure the accumulative privacy costs of all records remain at/below their specified privacy budgets throughout the training process but lead to a significant waste for records with large privacy budgets. All three types of privacy preference distributions are considered for a comprehensive evaluation of all methods' performance.

Unless otherwise specified, the default setup of \textit{ThreeLevels} comprises $\varepsilon_1=0.1$, $\varepsilon_2=1.0$ and $\varepsilon_3=5.0$ with corresponding proportion of 70\%, 20\%, and 10\%. Similarly, the \textit{BoundedMixGauss} distribution is defined as a mixture of $\mathcal{N}_1(0.1, 0.01)$, $\mathcal{N}_2(1.0, 0.05)$ and $\mathcal{N}_3(5.0, 0.5)$ with weights of 0.7, 0.2 and 0.1. 
For the \textit{BoundedPareto} case, we consider a specific Pareto distribution with a shape value of 1.0 and a lower bound of 0.1. These parameters are chosen to simulate realistic scenarios where the majority of users have strict privacy requirements. 
Examples of 1,000 records’ privacy preferences are illustrated in the first column of Figure \ref{fig:all_plots}.
 In the context of training a larger ResNet-18 model from scratch on the CIFAR-10 dataset, where the total number of trainable parameters exceeds 11M, we opt for more liberal privacy settings to maintain acceptable model utility. Specifically, we set \(\varepsilon_1=1.0\), \(\varepsilon_2=3.0\), and \(\varepsilon_3=10.0\) for both the \textit{ThreeLevels} and \textit{BoundedMixGauss} setups while keeping the ratios the same. For the \textit{BoundedPareto} distribution, the lower bound is adjusted to 1.0.

\textbf{Hyperparameters.} For the Fed-Heart-Disease experiments, we fix the per-client sampling probability $\lambda$=1.0 and $\delta$=1e-3; for the other experiments, we fix $\lambda$=0.5 and $\delta$=1e-4. The total communication rounds $T$ and the number of local training steps per round $\tau$ for different tasks are detailed in Table \ref{tab:datasets}. 
We explore a variety of constant learning rates (e.g., [0.1, 0.05, 0.01, 0.005, 0.001]) and clipping thresholds (e.g., [0.5, 1.0, 3.0, 5.0]) for the best results. 

\textbf{Model Utility.} 
The evaluation results under various privacy preference distributions, as shown in Figure \ref{fig:all_plots}, consistently demonstrate the superiority of our \textit{rPDP-FL} method over both \textit{Minimum} and \textit{Dropout} across diverse FL datasets. Specifically, the advantages over the \textit{Minimum} method suggest that the model benefits from records with larger privacy budgets, while its advantages over the \textit{Dropout} method imply that even records with conservative privacy budgets contribute positively to the training process. These findings underscore the crucial role of personalized privacy integration in achieving a more favorable trade-off between privacy and utility, particularly in scenarios where most individuals exhibit a strong emphasis on privacy concerns.

Upon comparing the results between MNIST-IID and MNIST-NIID, we observed a consistent reduction in test accuracy in the non-IID case across \textit{Minimum}, \textit{Dropout}, and our proposed method. This decline arises from the inherent limitations of the \textit{DP-FedAvg} algorithm under a heterogeneous environment \cite{noble2022differentially}. Despite this deterioration, it's noteworthy that the utility advantages stemming from incorporating privacy personalization remain. 

\section{Conclusion and Future Work}\label{sec:conclusion}

In this paper, we studied an unexplored real-world challenge to enable record-level personalized differential privacy in federated learning. Our proposed solution is a novel framework called \textit{rPDP-FL}, which employs a \textit{two-stage hybrid sampling} scheme with non-uniform record-level sampling. We devise an efficient strategy, referred to as \textit{Simulation-CurveFitting} (SCF), to estimate the individual sampling probabilities for all records associated with varying privacy budgets. Our investigation uncovers a valuable insight regarding rPDP-FL, i.e., a one-to-one correspondence between the sampling probabilities of records and their accumulative privacy costs which can be mathematically expressed through a simple exponential function. Empirical demonstrations show that our solutions yield significant performance enhancements compared to baselines that overlook personalized privacy preservation. 

As an early exploration into FL with personalized privacy protection, our work lays a foundation for future in-depth investigations and highlights several promising directions, such as: 
\begin{itemize}[leftmargin=*]
\item \textbf{User-level (device-level) PDP in cross-silo (cross-device) FL}. 
In FL scenarios, individual users (or devices) might possess multiple records or contribute data to various clients simultaneously \cite{kato2023uldp}. Intuitively, a single user may have distinct privacy preferences for their records, which poses significant challenges in establishing user-level personalized privacy protection.
\item \textbf{Effective learning on non-IID data with data-dependent privacy budgets.} 
In Appendix \ref{appendix}, we offer a brief discussion on a more complex scenario where the privacy budgets of users are directly linked to their raw data (e.g., dependent on their labels). Our findings point out the deficiencies in the current privacy personalization methods to yield substantial utility gain for groups that have significantly smaller privacy budgets and are a minority in the population. We highlight the need to address this challenge effectively in future research endeavors.
\end{itemize}

\begin{acks}
We would like to thank the reviewers for their thoughtful comments and efforts toward improving our paper and artifacts. 
This work was supported in part by the National Natural Science Foundation of China grants (62172423, 62206207, 62102352, U23A20306), National Institutes of Health grants (R01ES033241, R01LM013712), National Science Foundation grants (CNS-2124104, CNS-2125530, IIS-2302968), and National Key Research and Development Program of China grant (2021YFB3101100).
\end{acks}

\bibliographystyle{ACM-Reference-Format}
\bibliography{main_references}

\appendix
\addcontentsline{toc}{section}{Appendices}
\balance
\begin{figure}[htbp]\centering
\includegraphics[width=0.5\linewidth]{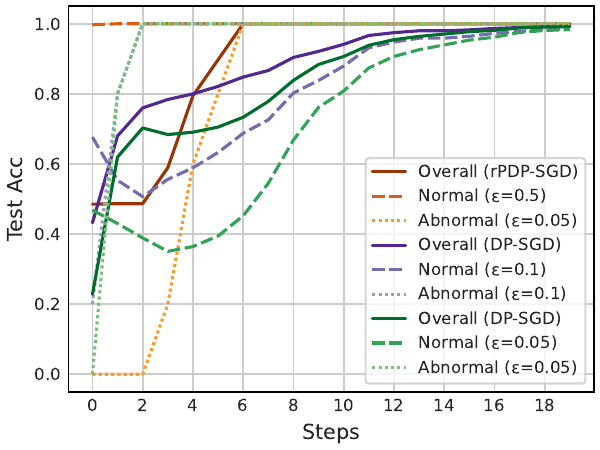}
\caption{Evaluation of the per-class test accuracy of a logistic regression model trained on the Heart-Disease dataset with data-dependent privacy budgets ($\varepsilon$=0.5 for normal patients and $\varepsilon$=0.05 for abnormal patients) for 20 iterations.}
\Description{Evaluation of the per-class test accuracy of a logistic regression model trained on the Heart-Disease dataset with data-dependent privacy budgets ($\varepsilon$=0.5 for normal patients and $\varepsilon$=0.05 for abnormal patients) for 20 iterations.}
\label{fig:heart_per_class_budget_assignment}
\end{figure}

\begin{table*}[t]\small\centering
\caption{Evaluation of the per-class test accuracy of a CNN model trained on MNIST  with data-dependent privacy budgets for 100 iterations.}
\label{tab:mnsit_per_class_budget_assignment}
\begin{tabular}{c|c|c|c|c|c|c|c|c|c|c|c}
\midrule\midrule
MNIST & 0 & 1 & 2 & 3 & 4 & 5 & 6 & 7 & 8 & 9 & Overall \\
\hline
Budget ($\varepsilon$) & 0.5 & 0.75 & 2.0 & 2.6 & 4.1 & 2.1 & 2.05 & 3.0 & 3.1 & 6.1 & - \\
\hline
rPDP-SGD & 93.49 & 95.9 & 87.45 & \textbf{90.83} & \textbf{95.1} & \textbf{90.37} & \textbf{94.74} & \textbf{92.13} & \textbf{88.62} & \textbf{92.64} & \textbf{92.55} \\
\midrule[1pt]
DP-SGD ($\varepsilon$=0.5) & 93.63 & 95.81 & 87.26 & 87.18 & 89.67 & 85.89 & 93.73 & 88.71 & 85.74 & 82.19 & 89.08\\
\hline
DP-SGD ($\varepsilon$=3.0) & \textbf{94.24} & \textbf{96.09} & \textbf{89.75} & 89.71 & 92.01 & 89.37 & 94.54 & 90.5 & 84.78 & 85.2 & 90.69 \\
\midrule[1pt]
Vanilla SGD & 99.1 & 99.6 & 97.11 & 98.33 & 99.1 & 98.65 & 98.27 & 99.07 & 95.87 & 94.1 & 98.06\\
\midrule\midrule
\end{tabular}
\end{table*}

\section{Additional Evaluation Results}
\label{appendix}

\noindent\textbf{Learning with data-dependent privacy bdugets.}
In this study, we focus on a general personalized privacy scenario in which individual privacy budgets follow a random distribution, independent of other factors such as the raw data. Nonetheless, there are scenarios where individuals with particular attributes (or labels) might exhibit different privacy concerns. For example, heart disease patients might demand more stringent privacy safeguards when their health records are utilized in training ML models. 

To examine the utility improvement through privacy personalization on different groups, we carry out preliminary experiments by training centralized ML models on the Heart-Disease and MNIST datasets. In each experiment, we apply DP-SGD and a variant of DP-SGD that incorporates the SCF strategy for achieving rPDP, which we denoted as \textit{rPDP-SGD}. We allocate distinct privacy budgets to each class. For the Heart-Disease dataset, ``normal'' patients are assigned a privacy budget of $\varepsilon$=0.5, whereas ``abnormal'' records are given a more conservative budget of $\varepsilon$=0.05. For the MNIST dataset, we adhere to the setup described in \cite{boenisch2023have}. 

In Figure \ref{fig:heart_per_class_budget_assignment}, we visualize per-class and overall test accuracy (averaged over 10 trials) for the logistic regression model trained on the Heart-Disease dataset. Due to the inherent simplicity of the dataset, both \textit{rPDP-SGD} and baselines ($\varepsilon$=0.05 and 0.1 for all classes) achieve convergence to perfect accuracy (1.0) within 20 iterations. In this experiment, we do not observe a discernible utility gain of rPDP-SGD compared to the baselines for the ``abnormal'' records (as indicated by the three dotted lines in the figure). This lack of utility improvement could stem from the unbalancedness of data distribution, together with the fact that ``abnormal'' records are sampled less frequently than ``normal'' records, causing the model to primarily learn from ``normal'' records during the initial phases of training.

Table \ref{tab:mnsit_per_class_budget_assignment} displays the final test accuracy (averaged over 5 trials) for 10 evenly sized classes of the MNIST dataset. It can be observed that for classes with privacy budgets below 3.0 (Classes 3, 5, 6, and 7), their test accuracy of rPDP-SGD significantly surpass those of the other two baselines. For classes with much lower privacy budgets (Classes 0, 1, and 2), rPDP-SGD demonstrates performance on par with that of DP-SGD ($\varepsilon$=3.0). 
Our findings indicate that the sampling-based method does not yield substantial utility improvements for groups that have significantly smaller privacy budgets and are a minority in the population. As discussed in Section \ref{subsec:existing_vs_ours}, current methods such as Filter and BinarySearch also fail to adequately address this issue. This suggests that an efficient and effective solution for this challenge has yet to be developed, leaving it as an open question for further investigation.

\end{document}